\newlang{\GEN}{\sf GEN}
\newcommand{\T}[2]{\lang{T_{#2}^{#1}}}
\newcommand{\ft}[3]{\lang{FT_{#2}(#1, #3)}}
\newcommand{\bt}[3]{\lang{BT_{#2}(#1, #3)}}
\newcommand{\hft}[3]{\lang{FT^{#1}_{#2}(#3)}}
\newcommand{\sft}[3]{\lang{\widehat{FT}_{#2}(#1, #3)}}
\newcommand{\children}[3]{\lang{Children^{#1}_{#2}(#3)}}
\newcommand{\proj}[2]{\ensuremath{\pi(#1, #2)}}
\newcommand{\bit}[2]{\ensuremath{{(#1)_#2}}}
\DeclareMathOperator*{\cross}{\times}
\newcommand{\cart}[2]{\ensuremath{\cross\limits_{#1}^{#2}}}
\newcommand{\V}[2]{\ensuremath{v_{#1}^{#2}}}
\newcommand{\wpbl}[2]{\ensuremath{w(#2, #1)}}
\newcommand{\stheta}[1]{\ensuremath{\widetilde{\theta}\left({#1}\right)}}
\newcommand{\sbigoh}[1]{\ensuremath{\widetilde{O}\left({#1}\right)}}
\newtheorem{theorem}{Theorem}
\newtheorem{definition}[theorem]{Definition}
\newtheorem{proposition}[theorem]{Proposition}
\newtheorem{lemma}[theorem]{Lemma}
\newtheorem{remark}[theorem]{Remark}
\newtheorem{claim}[theorem]{Claim}
\newcommand{\qed}{\rule{5pt}{5pt}}
\newenvironment{proof}{\noindent{\bf Proof:}}{\qed\bigskip}
\begin{document}

% Page heads
% \markboth{Balagopal Komarath and Jayalal Sarma M.N.}{Pebbling, Entropy and Branching Program Size Lower Bounds}

% Title portion
\title{Pebbling, Entropy and Branching Program Size Lower Bounds
\footnote{A preliminary version of this work containing a subset 
of the results in this paper, appeared at the Symposium of Theoretical Aspects of Computer Science (STACS 2013).
The additional results in this extended version appears in the journal submission \cite{ks-journal} (June 2013).
}}
\author{Balagopal Komarath\footnote{Supported by the TCS Research Fellowship.}
~~and~~Jayalal Sarma M.N. \\
Department of Computer Science \& Engineering \\
IIT Madras, Chennai, India\\
Email : {\tt \{baluks,jayalal\}@cse.iitm.ac.in}}

%\author{Balagopal Komarath
%\affil{Indian Institute of Technology Madras, Chennai, India}
%Jayalal Sarma M.N.
%\affil{Indian Institute of Technology Madras, Chennai, India}
%}

\maketitle
\begin{abstract}
We contribute to the program of proving lower bounds on the size of
branching programs solving the Tree Evaluation Problem introduced by Cook et. al. in
~\cite{wehr-tep}. Proving a super-polynomial lower bound for the size of
nondeterministic thrifty branching programs would separate $\NL$
from $\P$ for thrifty models solving the tree evaluation problem. 
First, we show that {\em Read-Once Nondeterministic Thrifty BPs} are equivalent to whole 
black-white pebbling algorithms thus showing a tight lower bound (ignoring polynomial factors) 
for this model.

We then introduce a weaker restriction of nondeterministic thrifty 
branching programs called {\em Bitwise Independence}.
The best known \cite{wehr-tep} nondeterministic
thrifty branching programs (of size $O(k^{h/2+1})$)  for the tree evaluation problem are
Bitwise Independent.  
As our main result, we show that any Bitwise Independent
Nondeterministic Thrifty Branching Program solving \bt{h}{2}{k} must
have at least $\frac{1}{2}k^{h/2}$ states. Prior to this work, lower
bounds were known for nondeterministic thrifty branching programs only for fixed
heights $h=2,3,4$~\cite{wehr-tep}. We prove our results by associating a fractional black-white 
pebbling strategy with any bitwise independent nondeterministic
thrifty branching program solving the Tree Evaluation Problem. Such a
connection was not known previously even for fixed heights.

Our main technique is the entropy method introduced by Jukna and
Z{\'a}k \cite{JZ01} originally in the context of proving lower bounds for
read-once branching programs. We also show that the previous lower
bounds known \cite{wehr-tep} for deterministic branching programs for
Tree Evaluation Problem can be obtained using this approach. Using
this method, we also show tight lower bounds for any $k$-way
deterministic branching program solving Tree Evaluation Problem when
the instances are restricted to have the same group operation in all
internal nodes.
\end{abstract}

%\category{F.1.1}{Models of Computation}{Unbounded-action devices, Circuits}

%\terms{Branching Program Size Lower Bounds}

%\keywords{Pebbling Games, Entropy, Branching Programs, Lower Bounds}

%\acmformat{Balagopal Komarath and Jayalal Sarma M.N., 2010., Pebbling, Entropy and Branching Program Size Lower Bounds}
% At a minimum you need to supply the author names, year and a title.
% IMPORTANT:
% Full first names whenever they are known, surname last, followed by a period.
% In the case of two authors, 'and' is placed between them.
% In the case of three or more authors, the serial comma is used, that is, all author names
% except the last one but including the penultimate author's name are followed by a comma,
% and then 'and' is placed before the final author's name.
% If only first and middle initials are known, then each initial
% is followed by a period and they are separated by a space.
% The remaining information (journal title, volume, article number, date, etc.) is 'auto-generated'.

%\begin{document}

\section{Introduction} The question whether \L\ is a proper subset of \P\ is
one of the central problems in complexity theory. One of the approaches to the
problem was proposed as a program by Cook~\cite{Coo74} by introducing a
suitable computational problem, namely the solvable path systems. The program
suggests to prove lower bounds for increasingly stronger models of computation
solving the solvable path systems problem. Indeed, for the specific problem,
the attempt is to discover the structure of that restricted variant of the
underlying computation process that captures the most natural, and if possible
the most general, algorithmic strategies for solving the problem 
\cite{Coo74} \cite{Bar91} \cite{Edm99} \cite{Gal07} \cite{wehr-tep} \cite{wehr-gen}.
Cook~\cite{Coo74} also proved super-logarithmic space lower bounds for marking
machines solving the solvable path systems problem. Marking machines capture
pebbling algorithms (which is a class of ``natural'' algorithms) solving this
problem.

Barrington and Mckenzie \cite{Bar91} took a similar approach by considering the
problem \GEN\ and attempted to prove (upper and lower bounds) for increasingly
stronger models of computation for solving \GEN.  Specifically,
Barrington~\cite{Bar91} considered ``oracle'' branching programs where each
state of the branching program is allowed to ask a question about the input.
For example, a general BP can ask queries of the form ``What is the
$i^{\text{th}}$ bit of the input?'' (This is called a branching program with
BIT oracle).  Barrington~\cite{Bar91} proved exponential size lower bounds for
branching programs equipped only with certain ``weak'' oracles.  Gal {\em et
al} \cite{Gal07} considered incremental branching programs for solving \GEN\
which can be thought of as branching programs trying to solve the \GEN\ problem
by incrementally finding the elements of the closure.  

Cook {\em et al} \cite{wehr-tep} proposed the tree evaluation problem for
separating \L\ and \P\ and introduced thrifty branching programs as a model
that captures ``natural'' algorithms solving the tree evaluation problem.  It
is shown in \cite{wehr-tep} that deterministic thrifty branching programs
exactly correspond to algorithms that implement black pebbling.  They also
introduced the concept of fractional black-white pebbling and showed that
nondeterministic thrifty branching programs can implement fractional
black-white pebbling.  It is also known that super-polynomial size lower-bounds for
deterministic semantic incremental branching programs solving the \GEN\ problem
follows from super-polynomial size lower-bounds for deterministic thrifty branching
programs solving a generalization of tree evaluation problem called the
DAG-evaluation problem~\cite{wehr-gen}.

\noindent{\bf Tree Evaluation Problem:} We now briefly describe the
tree evaluation problem (see section~\ref{sec:prelims} for a formal
definition).  An instance of the tree evaluation problem, \ft{h}{2}{k}, is a
complete binary tree where each leaf is associated with an element from $[k]$
(which we think of as the value of the leaf node) and the $i^{\text{th}}$
internal node is associated with a function $f_i : [k]^{2} \mapsto [k]$. The
value of an internal node is obtained by applying this function to the values
of its children. The output is the value of the root node. The corresponding
Boolean version, \bt{h}{2}{k}, differs from \ft{h}{2}{k} in that the function
at the root node maps a value in $[k]^{2}$ to a value in $\{0, 1\}$. An
instance of \bt{h}{2}{k} is called a ``yes'' instance if and only if the value
of the root node is 1.  Another variant of the tree evaluation problem is the
single function variant \sft{h}{2}{k} where the functions at all internal nodes
are the same.  A natural computational model for tree evaluation problem is
$k$-way branching program where each node queries the value of a single $k$-ary
variable (i.e., the query is either $i$, where $i$ is a leaf node, or $f_j(r,
s)$, where $j$ is an internal node and $r, s \in [k]$). As observed
in~\cite{wehr-tep}, any size lower bound of the form $\Omega(k^{r(h)})$ for
$k$-way branching programs, where $r(h)$ is an unbounded function, would prove
that \L\ $\neq$ \P.  We only consider $k$-way branching programs in this paper.
Here we are interested in how the size of the branching programs solving
\ft{h}{2}{k} increases with respect to $h$ and $k$. 

A natural algorithm to solve \ft{h}{2}{k} is to evaluate the tree in a
bottom-up fashion. This can be captured by the concept of black pebbling
\T{h}{2}\ (The complete binary tree of height $h$). A black pebble on a node
indicates that the value of the node is known. A black pebble can be placed on
an internal node only if both its children are black pebbled. As a special
case, a black pebble can be freely placed on any leaf node.  It can be shown
that $h$ pebbles are necessary and sufficient for black pebbling \T{h}{2}.
Since a value in $[k]$ can be represented using $\log k$ bits. This corresponds
to a size bound of $\Theta(k^h)$ for branching programs. Similarly, fractional
black-white pebbling captures natural nondeterministic algorithms solving
\ft{h}{2}{k}. A white pebble can be freely placed on any node and corresponds
to guessing the value of that node. However, to remove a white pebble from a
node (this corresponds to verifying the guessed value) both its children have
to be pebbled. Moreover, a branching program may compute or guess a fraction of
bits of the values of nodes and this results in fractional black and white
pebbles respectively. 

A deterministic thrifty branching program is one in which the branching program
is only allowed to query $f_j(r, s)$ when $r$ and $s$ are the values of the
children of node $j$.  Cook {\em et al.}~\cite{wehr-tep} showed that
deterministic thrifty branching programs solving \bt{h}{2}{k} require
$\Omega(k^h)$ states by showing that such branching programs implement exactly
a black pebbling strategy. Cook {\em et al.}~\cite{wehr-tep} also proved tight
lower bounds for nondeterministic thrifty branching programs for $h = 2, 3,
4$. They also show an upper-bound of $O(k^{h/2+1})$ for nondeterministic
thrifty branching programs solving \ft{h}{2}{k}. This shows that the
nondeterministic variant is more powerful compared to the deterministic model. 

To complete the discussion, we refer the reader to \cite{Razborov91} for a detailed survey on branching program lower bounds. 
Specifically, we note that good lower bounds are known against read-once branching program models (See \cite{wegener87}, \cite{zak84}, \cite{JZ01}) although not for the problem that is of interest in this paper.

\noindent{\bf Our Results:} 
In this paper, we extend the results in Cook {\em et al.}~\cite{wehr-tep} to two restricted models in the nondeterministic setting. We also provide unified views of many results that were known regarding the branching program size for tree evaluation problem.

To begin with, we show that computation done by Read-Once nondeterministic thrifty branching programs can be captured by the whole black-white pebbling game. This observation combined with the known lower
bounds~\cite{wehr-tep} for whole black-white pebbling imply the following theorem.

\begin{theorem} Any Read-Once NTBP solving \bt{h}{2}{k} must have $k^{\lceil
h/2 \rceil}$ states.  \end{theorem}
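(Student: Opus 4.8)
The plan is to prove the theorem by extracting, from an arbitrary Read-Once NTBP $P$ solving $\bt{h}{2}{k}$, a legal whole black-white pebbling of the complete binary tree $\T{h}{2}$ whose number of pebbles is small relative to the number of states of $P$, and then invoking the known whole black-white pebbling lower bound of Cook \emph{et al.}~\cite{wehr-tep}. Concretely, I would show that if $P$ has fewer than $k^{\lceil h/2\rceil}$ states then $\T{h}{2}$ admits a whole black-white pebbling with at most $\lceil h/2\rceil$ pebbles (the exponent is shifted by one because the value of the root of $\bt{h}{2}{k}$ is Boolean and so never has to be stored), contradicting the fact that whole black-white pebbling of $\T{h}{2}$ needs $\lceil h/2\rceil + 1$ pebbles.

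To build the pebbling I would fix a family of instances on which the induced values of the internal nodes vary independently over $[k]$ as the leaf values are varied (the internal functions are chosen generically, and the root function outputs a bit), and trace an accepting computation of $P$ on such an instance. As the computation proceeds I maintain a node set $C$: a node $v$ receives a \emph{black} pebble exactly when $P$ queries it --- either a leaf query, or, since $P$ is thrifty, a query $f_v(r,s)$ in which $r,s$ are forced to be the true values of the children of $v$ --- and $v$ receives a \emph{white} pebble exactly when the nondeterministic choices taken so far commit $P$ to a value of $v$ that has not yet been confirmed by such a query; a pebble leaves $v$ once the value of $v$ is no longer needed by any continuation of the chosen accepting path. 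The heart of the argument is that $C$ then evolves by legal whole black-white pebbling moves on $\T{h}{2}$ and that at the end the root carries a pebble: placing a black pebble on an internal $v$ is legal because thriftiness forces both children of $v$ to lie in $C$ at that instant; removing (i.e.\ confirming) a white pebble on $v$ is legal because a confirmation can only arise from a thrifty query to $f_v$, which again forces both children to be pebbled; and acceptance forces the value of the root to be determined.

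The size bound then comes from read-onceness: once the value of a node has been queried or guessed it cannot be re-established, so at the moment when $|C|$ is largest the state of $P$ reached on our instance family must encode the whole tuple of current values of the nodes in $C$; since, by the choice of instances, every coordinate of that tuple other than the root ranges over all of $[k]$ independently of the others, two inputs differing in a single such coordinate must land in different states, giving $\mathrm{size}(P) \ge k^{\,|C|-1} \ge k^{\,\lceil h/2\rceil}$ once we know $|C|$ reaches the whole black-white pebbling number $\lceil h/2\rceil + 1$ of $\T{h}{2}$. (Running the construction in the other direction turns a whole black-white pebbling of $\T{h}{2}$ with $p$ pebbles into a Read-Once NTBP for $\bt{h}{2}{k}$ of roughly $k^{p}$ states, which establishes the advertised equivalence with whole black-white pebbling, though only the direction above is needed here.)

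The step I expect to be the real obstacle is making the structural claim of the second paragraph airtight: I have to show that the ``committed but unconfirmed'' sets genuinely obey the white-pebble placement and removal rules along \emph{every} accepting path --- in particular that a guessed value is never silently re-derived and that each confirmation is witnessed by a thrifty query evaluated precisely at the guessed value --- and that ``the value of $v$ is still needed by some continuation'' is a well-defined notion for a nondeterministic read-once program. A secondary but still delicate point is choosing the instance family so that the counting step really does see $|C|-1$ mutually independent $[k]$-valued coordinates, which is where the precise conventions for $\T{h}{2}$ and the Boolean-root off-by-one must be reconciled.
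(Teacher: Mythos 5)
Your overall strategy is the same as the paper's: trace an accepting computation on a hard input family in which the non-root node values vary independently over $[k]$, read off a whole black-white pebbling of \T{h}{2} along that path, invoke the known $\lceil h/2\rceil+1$ pebbling lower bound to locate a state carrying at least $\lceil h/2\rceil$ pebbles on non-root nodes, and argue that such a state pins down the values of all pebbled nodes. However, the two steps you yourself flag as ``the real obstacle'' are genuine gaps, and the paper closes them in ways that differ from what you sketch. First, the paper does not define white pebbles semantically via ``committed but unconfirmed'' values or ``still needed by some continuation''; it defines the configuration at a state $s$ purely syntactically from the order of queries along the path: a non-root node $i$ is black-pebbled at $s$ if $i$ is queried before $s$ and its parent at or after $s$, white-pebbled if the parent is queried before $s$ and $i$ at or after $s$, and unpebbled otherwise. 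Read-onceness is then exactly what makes this well defined independently of which input and which accepting path through $s$ one uses (Lemma~\ref{lem:pbl-indep}): if two paths through $s$ disagreed on whether $i$ had already been queried, one could concatenate a prefix of one with a suffix of the other to obtain a graph-theoretic path querying $i$ twice. Your semantic formulation leaves precisely this well-definedness unestablished, and it is not clear it can be patched without reverting to the syntactic definition.

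Second, your counting step asserts that ``the state must encode the whole tuple of current values of the pebbled nodes,'' i.e.\ that two inputs differing in a pebbled coordinate reach different states, but this is a claim that needs proof, not an observation. The paper's Lemma~\ref{lem:state-input} supplies it: if two inputs in the hard set $E$ agree on all unpebbled nodes but differ on a pebbled node and both have accepting paths through $s$, then (using Proposition~\ref{prop:thrifty-all-nodes-queried} and read-onceness, which makes the node sets queried before and after $s$ on the two paths disjoint) one can splice the prefix of one accepting path with the suffix of the other to obtain an accepting computation path for a hybrid input that necessarily makes a non-thrifty query --- a contradiction. Hence at most $k^{N-\lceil h/2\rceil}$ of the $k^N$ inputs in $E$ map to any one high-pebble state, giving the bound by division; this preimage-counting also sidesteps the issue in your formulation that different settings of the pebbled coordinates need not all have the same maximal-pebble state or the same set of pebbled nodes. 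Your ``$|C|-1$ for the Boolean root'' bookkeeping is likewise cleaner in the paper's form: the root carries at most one pebble, so a configuration with $\lceil h/2\rceil+1$ pebbles has at least $\lceil h/2\rceil$ on non-root nodes, and only non-root nodes are ever counted.
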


As our main result, we show that computation of 
nondeterministic thrifty branching programs with a restriction
that we call \emph{bitwise independence} can be associated with a fractional
black-white pebbling sequence and therefore requires super-polynomial size. The
additional restriction of bitwise independence is not too severe since all
known upper-bounds using nondeterministic thrifty branching programs can be
achieved using those with bitwise independence property. In particular, the
branching program described in \cite{wehr-tep} that achieve $O(k^{h/2+1})$
upper-bound satisfy bitwise independence. Our main result is the first
super-polynomial lower bound for some restriction of nondeterministic thrifty
branching programs solving the tree evaluation problem.  \begin{theorem}[Main Theorem] If $B$ is a
bitwise independent nondeterministic thrifty branching program solving
\bt{h}{2}{k}, then $B$ has at least $\frac{1}{2}k^{h/2}$ states.  \end{theorem}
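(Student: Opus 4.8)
The plan is to extract from $B$ a legal fractional black-white pebbling of the complete binary tree $\T{h}{2}$ whose peak pebble count is controlled by $\log_2|B|/\log_2 k$, and then to combine this with the known fact \cite{wehr-tep} that every fractional black-white pebbling of $\T{h}{2}$ places more than $h/2$ pebbles at some step (its fractional pebbling number being $h/2+1$). First I would fix a hard distribution $\mathcal{D}$ on yes-instances of \bt{h}{2}{k}: draw the leaf values and the internal functions at random and condition so that the root outputs $1$, arranged so that $B$ is forced to learn the value of every internal node before it can accept. For each instance $I$ in the support of $\mathcal{D}$, fix one accepting computation path $\gamma_0,\gamma_1,\ldots,\gamma_{t_I}=\accstate$ of $B$ on $I$.

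The heart of the argument is to attach to each state $\gamma$ a fractional pebbling configuration. For a node $v$ of $\T{h}{2}$, define the black weight of $v$ at $\gamma$ to be the fraction of the $\log_2 k$ bits of the true value of $v$ that are \emph{determined} by the event ``the computation of $B$ on $I\sim\mathcal{D}$ reaches $\gamma$'', and the white weight of $v$ at $\gamma$ to be the fraction of those bits that $B$ has guessed nondeterministically on the way to $\gamma$ without yet having re-derived. Thriftiness is what allows any weight to accumulate on an internal node: a query $f_j(r,s)$ is legal only when $r,s$ are the actual children values of $j$, so traversing such an edge reveals all $\log_2 k$ bits of each child, and only then can bits of the value of $j$ enter the picture at all. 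Bitwise independence is what makes these weights \emph{well-defined fractions} of $\log_2 k$ — independent of which input or which accepting path consistent with $\gamma$ is used — and additive over bits, so that under $\mathcal{D}$ the conditional entropy of $v$'s value given $\gamma$ equals $(1-(\text{black}+\text{white weight of }v))\log_2 k$. I would then check that along each fixed accepting path the consecutive configurations obey the fractional black-white pebbling rules of \cite{wehr-tep}: leaves may be pebbled and unpebbled freely, while an internal node's black weight can only rise, and its white weight can only be discharged, when both of its children are fully pebbled — each forced by the thrifty restriction together with the per-node non-interaction guaranteed by bitwise independence.

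Granting the construction, along the path for each $I$ the total weight exceeds $h/2$ at some step; let $i^\ast(I)$ be the first such step and let $\Gamma$ be the random variable $\gamma_{i^\ast(I)}$ for $I\sim\mathcal{D}$, so that $H(\Gamma)\le\log_2|B|$. The Jukna--Z{\'a}k entropy step is then to cut each path at $i^\ast(I)$ and observe that $\Gamma$ must encode the information that separates the node values already held (in the black-pebble sense) and the guesses still to be verified (in the white-pebble sense) from the part of the instance not yet touched; summing the per-node conditional entropies using bitwise independence, what $\Gamma$ must carry amounts to more than $(h/2-O(1))\log_2 k$ bits, whence $\log_2|B|\ge H(\Gamma)>(h/2-O(1))\log_2 k$. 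Charging the ``$+1$'' pebble of the $h/2+1$ bound and the white-pebble and boundary slack against the constant yields $|B|\ge\tfrac12 k^{h/2}$.

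The main obstacle is the second paragraph: showing that the fractional configuration is genuinely a function of the branching-program \emph{state} and not of the particular consistent input, and that consecutive configurations along an accepting path form a \emph{legal} fractional black-white pebbling using only the bitwise independence hypothesis. In particular one must rule out $B$ accumulating partial knowledge of an internal node's value without holding both its children (this is exactly what thriftiness forbids) and must control how guessed (white) weight is discharged, since a careless definition could violate the pebbling rules precisely at the heaviest step — the step the whole lower bound rests on. The degradation from $h/2+1$ pebbles down to the exponent $h/2$, and the constant $\tfrac12$, are inherent to converting a pebble-number bound into a state count through entropy, so this method is not expected to recover the $O(k^{h/2+1})$ upper bound exactly.
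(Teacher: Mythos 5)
Your overall architecture matches the paper's: associate a fractional black-white pebbling configuration to states of $B$ using bitwise independence, invoke the $h/2+1$ lower bound for fractional pebbling of \T{h}{2} to find a state carrying at least $h/2$ pebbles on non-root nodes, and then count how few inputs can reach such a state. The paper even uses essentially your weights: for the hard input set $E$ it sets $b(i,s) = 1 - \log_k|\proj{F_s}{i}|$ and $w(i,s) = \log_k\bigl(|\proj{F_s}{i}|/|\proj{A_s}{i}|\bigr)$, which formalize your ``fraction of bits determined'' and ``fraction guessed but not yet verified.'' However, the step you yourself flag as the main obstacle is a genuine gap, and the direct route you propose for closing it --- ``check that along each fixed accepting path the consecutive configurations obey the fractional pebbling rules'' --- is not what works. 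The per-state configurations do \emph{not} evolve legally from state to state: $|\proj{F_s}{i}|$ can shrink (raising the black value of $i$) at states where the children of $i$ are nowhere near fully pebbled. The paper's resolution is a two-level construction: it identifies \emph{critical states} (defined top-down from the last root query, via the states where a parent needs a pebble on its child), defines an auxiliary pebbling that changes only at critical states --- where thriftiness forces both children to be fully pebbled (Lemma~\ref{lemma:pbl-children}) --- and then proves the key underestimation lemma (Lemma~\ref{lemma:pbl-underest}) that this auxiliary pebbling never exceeds the per-state values $b(i,s), w(i,s)$. That lemma, especially its white-pebble half (a ratio argument showing $|\proj{F_{s'}}{i}|/|\proj{A_{s'}}{i}|$ cannot drop between the guess and its verification, using bitwise independence to transplant accepting paths and derive a non-thrifty query from any counterexample), is the technical heart of the proof and is absent from your proposal.

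There is also a quantitative problem at the end. Your entropy accounting concludes with ``more than $(h/2-O(1))\log_2 k$ bits,'' which only yields $|B| \geq k^{h/2 - O(1)}$ --- weaker than the claimed $\frac{1}{2}k^{h/2}$ by a factor polynomial in $k$. The paper's count has no such slack: for inputs in $E$, a state whose non-root pebble total is $p$ is reached by exactly $k^{N-p}$ accepting inputs (Claim~\ref{claim:pbl-inputs}, since $|\proj{A_s}{i}| = k^{1-p_i}$ node by node), so $|E| = k^N$ inputs distributed over states each receiving at most $k^{N-h/2}$ gives $k^{h/2}$ states outright when $k$ is a power of two; the factor $\frac{1}{2}$ arises only from restricting to the largest power of two below a general $k$, not from any boundary or white-pebble slack.
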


We associate these branching programs with fractional black-white pebbling. Cook {\em et
al.}~\cite{wehr-tep} showed that if the tree \T{h}{2}\ can be fractionally
pebbled using $p$ pebbles, then the corresponding (binary) Tree evaluation
problem can be solved by a nondeterministic thrifty branching program of size
$O(k^p)$. However, the converse direction is far from clear. We make progress
in this direction and prove our lower bound by connecting bitwise independent
nondeterministic thrifty branching programs to fractional black-white pebbling
sequences. We use the known result \cite{frac-pbl-lb} (see also~\cite{wehr-tep})
that $h/2+1$ pebbles are necessary and sufficient to pebble \T{h}{2} using
fractional black-white pebbling, to derive our lower bounds. We note that the
lower bounds for $h = 2, 3, 4$ in \cite{wehr-tep} were not shown by associating
it with fractional black-white pebbling. 

We summarise the relationships among pebbling games and branching program
models in Table~\ref{tab:bp-pbl}.  We use ``soft'' asymptotic notation that 
ignores factors polynomial in the input size (i.e., Factors of the order of 
$O((2^h k)^c)$ for any constant $c$) to describe the bounds in Table~\ref{tab:bp-pbl}.  
The exact bounds are given in the statements of appropriate theorems.

\begin{table}[ht] \begin{tabular}{llc} Thrifty Branching Program Model &
Pebbling Game & Size\\ \hline Deterministic & Black &
$\stheta{k^h}$ \cite{wehr-tep}\\ Nondeterministic Syntactic Read-Once & Whole Black-White
& $\stheta{k^{\lceil h/2 \rceil}}$ \\ Nondeterministic Bitwise Independent &
Fractional Black-White & $\stheta{k^{h/2}}$ \\ \end{tabular} \label{tab:bp-pbl}
%\caption{Branching Program Models and equivalent Pebbling Games} 
\end{table}

Our main technique is a method proposed by Jukna and Z{\'a}k~\cite{JZ01} for
proving size lower bounds for branching programs which they call the {\em
entropy method}. Briefly, the method is to distribute a large set of inputs
among the states of the branching program such that only a small number of
inputs get mapped to any particular state. To achieve this, Jukna and
Z{\'a}k \cite{JZ01} proposed to consider the set $F$ of inputs reaching that state
and show that we can uniquely determine an input in $F$ by a decision tree of
low average depth (equivalently, the set $F$ has low entropy). It follows that
there are a large number of states.  

As our next contribution, we show that the lower bound proofs in \cite{wehr-tep} for $k$-way branching programs solving \hft{3}{2}{k},
\children{4}{2}{k} and thrifty branching programs solving \bt{h}{2}{k} can be
obtained using this framework.  Thus we get simplified and unified views of
the proofs for the following theorems.  \begin{theorem} \begin{itemize} \item
Any deterministic $k$-way branching program solving \hft{3}{2}{k} must have at
least $k^3$ states.  \item Any deterministic $k$-way branching program solving
\children{4}{2}{k} must have at least $k^4$ states.  \end{itemize}
\end{theorem}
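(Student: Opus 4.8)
\medskip\noindent\textbf{Proof plan.}
Both statements are proved by the entropy method, along a common skeleton. Given a deterministic $k$-way branching program $B$ computing a (single- or multi-valued) function $g$ on an input set $D$, recall that each $x\in D$ follows one computation path. From the run of $B$ on $x$ we extract a \emph{milestone} state $m(x)$ on that path and prove a \emph{freezing lemma}: $m(x)$ already determines a tuple $\tau(x)\in[k]^t$ of ``live values'' of $x$. If $\tau$ maps $D$ onto $[k]^t$, then $x\mapsto\tau(x)$ factors through $x\mapsto m(x)$, so $B$ has at least $k^t$ states. (When a left/right symmetry forces $D$ to be split into classes with different milestone maps, the corresponding families of milestone states query \emph{different} internal functions, hence are disjoint as sets of states; their contributions add, so no constant is lost. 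The finer inequality $\log|B|\ge H(\mu)-\max_v H(\tau\mid m(x)=v)$, with $\mu$ uniform on $D$, does the same job when the freezing lemma only yields \emph{small} residual entropy.) So each part reduces to choosing $D$, defining $m$, and proving the freezing lemma with $t=3$, resp.\ $t=4$.

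\smallskip\noindent\emph{Part 1: \hft{3}{2}{k}.} Name the root $r$, its internal children $L,R$, the leaves $\ell_1,\ell_2$ under $L$ and $\ell_3,\ell_4$ under $R$, and write $\mathrm{val}_L,\mathrm{val}_R$ for the node values; thus $g(x)=f_r(\mathrm{val}_L,\mathrm{val}_R)$. Take $D$ to be all instances, minus a negligible set on which some internal function is degenerate in the relevant coordinate. For every $x\in D$, $B$ must actually query each of $f_L(\ell_1,\ell_2)$, $f_R(\ell_3,\ell_4)$ and $f_r(\mathrm{val}_L,\mathrm{val}_R)$: if it skipped, say, $f_L(\ell_1,\ell_2)$, then resetting that single table entry to a value for which $f_r(\cdot,\mathrm{val}_R)$ differs would change $g(x)$ without changing $B$'s path, contradicting correctness. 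Split $D$ according to whether $B$ queries $f_L(\ell_1,\ell_2)$ before $f_R(\ell_3,\ell_4)$; treat that class, and set $m(x)$ to be the state immediately before $B$'s first query to $f_R$ at $(\ell_3,\ell_4)$. The query label at $m(x)$ is exactly ``$f_R(\ell_3,\ell_4)$'', so $m(x)$ reveals $\ell_3,\ell_4$. The third coordinate $\mathrm{val}_L$ is the point of the argument: at $m(x)$ the program has already read $\mathrm{val}_L=f_L(\ell_1,\ell_2)$, re-reading $L$'s subtree yields it nothing, and $g$ uses that subtree only through $\mathrm{val}_L$; tracing the path out of $m(x)$ to the forced query of $f_r$ at $(\mathrm{val}_L,\mathrm{val}_R)$, and perturbing only right-subtree entries and entries of $f_r$ never read before $m(x)$, forces the first coordinate $\mathrm{val}_L$ of that query to be a function of $m(x)$ alone. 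Hence $\tau(x)=(\mathrm{val}_L,\ell_3,\ell_4)$ is determined by $m(x)$ and ranges over $[k]^3$, so $B$ has at least $k^3$ states.

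\smallskip\noindent\emph{Part 2: \children{4}{2}{k}.} Now the height-$4$ tree has root $\rho$ with internal children $L,R$, each the root of a height-$3$ subtree, and $B$ outputs the pair $(\mathrm{val}_L,\mathrm{val}_R)$. Apply Part~1 \emph{inside the subtree rooted at $R$} (with $\mathrm{val}_R$ as its output): this yields, on a suitable class of instances, a milestone $m(x)$ -- the state just before $B$'s true-argument query to the relevant level-$3$ function of $R$'s subtree -- whose label together with the Part~1 freezing lemma determines a triple of values living inside $R$'s subtree. For the fourth coordinate, restrict to instances on which $B$ has finished reading the left subtree before reaching $m(x)$ (other orderings handled symmetrically, again with disjoint milestone-state families); then the path out of $m(x)$ touches only $R$'s subtree before halting, while a sink of $B$ is labelled with the entire output pair, in particular with $\mathrm{val}_L$. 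Two instances that agree on the right subtree up to $m(x)$ but differ in $\mathrm{val}_L$ would take the same path out of $m(x)$ into the same sink and output the same $\mathrm{val}_L$ -- impossible -- so $m(x)$ also determines $\mathrm{val}_L$. Thus $m(x)$ determines a $4$-tuple that is onto $[k]^4$, and $B$ has at least $k^4$ states.

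\smallskip\noindent The real obstacle in both parts is the freezing lemma: that reaching the milestone commits $B$ irrevocably to $\mathrm{val}_L$ (Part~1), and to $\mathrm{val}_L$ together with the three right-subtree values (Part~2), with zero residual uncertainty. Making this rigorous needs (i) the ``forced query'' observations, which keep the milestone well defined on all but a negligible set of inputs while leaving $\tau$ surjective; (ii) the fact that a deterministic machine gains nothing by re-querying a coordinate it has already read, so that past the milestone the left subtree (and, in Part~2, the already-read portion of the right subtree) is inert; and (iii) a disciplined case analysis over the order in which $B$ interleaves the two subtrees, kept costless by the disjointness of the milestone-state families. Granting these localization facts, the state counts follow immediately from the factoring $x\mapsto m(x)\mapsto\tau(x)$.
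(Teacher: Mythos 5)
Your plan is a genuinely different route from the paper's, but it rests on a ``freezing lemma'' that is false as stated, and the sketch offers no way around this. You claim that the milestone state $m(x)$ --- the state making the first true-argument query $f_3(\ell_6,\ell_7)$ after the true-argument query to $f_2$ --- determines $\mathrm{val}_L = v_2$. It need not. A correct deterministic BP can query $\ell_4,\ell_5$, then $f_2(\ell_4,\ell_5)$, then \emph{merge} all $k$ branches (forgetting $v_2$), query $\ell_6,\ell_7$ and $f_3(\ell_6,\ell_7)$ using only $O(k^2)$ states in that phase, and only afterwards re-query $\ell_4,\ell_5$ and $f_2(\ell_4,\ell_5)$ to recover $v_2$ before querying the root. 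Such a program is perfectly correct; its $k^3$ states live in the re-derivation phase, which your milestone count never sees. The step ``re-reading $L$'s subtree yields it nothing'' / ``a deterministic machine gains nothing by re-querying'' is an information-theoretic intuition, not a structural fact: the computation path out of $m(x)$ is allowed to branch on re-queried left-subtree variables, so two inputs with different $v_2$ can share $m(x)$ and still both be answered correctly. The same problem infects Part~2 (``restrict to instances on which $B$ has finished reading the left subtree before $m(x)$'' is not a well-defined restriction, since the program may always return to those variables), and the bookkeeping over interleaving classes with ``disjoint milestone families'' is also left unargued. So the state count does not follow from what is established.

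The paper avoids intermediate-state freezing entirely. It first proves that any $k$-way BP computing $u+_k v$ has at least $k$ states, by mapping each input to the \emph{last edge} of its computation path: an edge labelled $r$ leaving a state that queries $u$ and entering the output state labelled $s$ is consistent only with $(u,v)=(r,\,s-_k r)$, because the output state's label reveals $u+_k v$ and $+_k$ is cancellative. Thus there are $\ge k^2$ edges, hence $\ge k$ states. It then restricts \hft{3}{2}{k} to instances with $f_1=+_k$ and fixed sibling-leaf values $(r,s)$, so the restricted sub-BP computes $f_2(r,s)+_k f_3(r,s)$ and needs $k$ states querying $f_2(r,s)$ or $f_3(r,s)$; these state sets are disjoint over the $k^2$ choices of $(r,s)$, giving $k^3$. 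The \children{4}{2}{k} bound is the same scheme one level up ($k^2$ states for computing the pair $(u+_k v, w+_k x)$, times $k^2$ disjoint leaf restrictions). If you want to salvage your approach, you would need to replace the freezing lemma by an argument that charges the ``re-derivation'' states as well, which is essentially what the paper's last-edge-plus-output-label argument accomplishes.
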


%Such a reformulation not only simplifies and unifies the arguments, but also
%might give the insights into the limitations of the method itself.

We then apply our method in a restricted setting where the functions at all
internal nodes are given to be the same.  \begin{theorem} Any deterministic $k$
way branching program solving \sft{h}{2}{k} with the functions at internal
nodes restricted to a group operation must have at least  $2^{h-2} k$ states.
\end{theorem}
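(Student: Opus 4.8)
The plan is to specialize the common internal function to one fixed group, reduce to a purely symmetric problem, and then attack it with the Jukna and Z{\'a}k entropy method; in fact a single group already suffices.

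\textbf{Reducing to a sum.} I would restrict the inputs of $\sft{h}{2}{k}$ to those whose common internal function is $f(r,s)=r+s\bmod k$, the cyclic group $\mathbb{Z}_k$. Any branching program solving the group-restricted $\sft{h}{2}{k}$ is in particular correct on all such inputs, and since $\T{h}{2}$ has $m:=2^{h-1}$ leaves and associativity of $+$ makes the root value equal $\ell_1+\cdots+\ell_m\bmod k$ (where $\ell_1,\dots,\ell_m$ are the leaves read left to right), the branching program restricted to these inputs computes $\mathrm{SUM}_m\colon[k]^m\to[k]$. Moreover every function query $f(r,s)$ is constant on these inputs, so the states carrying such queries may be contracted, which only decreases the state count; it therefore suffices to show that any deterministic $k$-way branching program computing $\mathrm{SUM}_m$ has at least $\tfrac{m}{2}k=2^{h-2}k$ states, and this is matched, up to the polynomial-in-$k$ slack used throughout the paper, by the obvious bottom-up program.

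\textbf{The entropy method.} Take $F=[k]^m$. For $x\in F$, follow its computation path and, for $1\le j\le m$, let $\sigma_j(x)$ be the state reached just before the $j$-th \emph{distinct} leaf is read for the first time; since $\mathrm{SUM}_m$ depends on every leaf, each path reads all $m$ of them, so $\sigma_j$ is defined, and for each fixed $j$ the map $x\mapsto\sigma_j(x)$ partitions $F$ among the states. Because the query attached to a state is fixed in advance, all inputs landing on a common value $v=\sigma_j(x)$ are about to read the \emph{same} leaf there. The key is a fiber bound: using that in a group the remaining leaf is determined by the root value and the other $m-1$ leaves, I would argue that the value the program eventually outputs, together with the values of the $m-1$ leaves other than the one read at $v$, pins down $x$ among all inputs with $\sigma_j(x)=v$; hence each such fiber has size at most $k^{m-1}$, and there are at least $|F|/k^{m-1}=k$ distinct values of $\sigma_j$. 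Summing over the $m$ choices of $j$ and separating the resulting witness states should give $\approx mk$ states, hence at least $\tfrac{m}{2}k=2^{h-2}k$ after the bookkeeping losses.

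\textbf{The main obstacle.} The real difficulty is the non-read-once, non-levelled nature of general $k$-way branching programs: a state can be entered by distinct computation paths that have read different sets of leaves, possibly re-reading leaves afterwards, so neither ``$\sigma_j(x)$ knows which $j-1$ leaves have been read'' nor ``the witness states at stage $j$ are disjoint from those at stage $j'$'' is automatic --- a single state may legitimately sit at different stages for different inputs. Turning the heuristic ``$k$ states per stage, $m$ stages'' into a genuine bound therefore requires either showing, from correctness on $\mathrm{SUM}_m$, that these confusions cannot occur at the checkpoint states actually used (an adversary who changes a leaf that one incoming path skipped but another read would flip the output undetected), or else replacing the stage-by-stage count by a single global entropy estimate in which every input is charged exactly once. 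I expect this bookkeeping, rather than the group-theoretic fiber bound, to be where the real work lies.
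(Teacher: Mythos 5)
Your reduction to the cyclic group $\mathbb{Z}_k$ and to the function $\ell_1 +_k \cdots +_k \ell_m$ on the $m = 2^{h-1}$ leaves is exactly the paper's first move, but the rest of the proposal does not close. The fiber bound is the problem, and it is not merely ``bookkeeping.'' For an intermediate state $v = \sigma_j(x)$ about to read leaf $\ell$, the fiber has the form $S_v \times [k]$ where $S_v$ is the set of settings of the already-read leaves that bring the computation to $v$ at stage $j$; to get $|\mathrm{fiber}| \le k^{m-1}$ you need $|S_v| \le k^{m-2}$, and the only way to force that is to argue that the eventual output is determined by $v$ together with the leaves read \emph{after} $v$ --- which fails precisely because the program may re-read leaves. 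Your stated justification (``the output together with the other $m-1$ leaves pins down $x$'') is true of every input globally, so it bounds the fiber by $k\cdot k^{m-1}=k^m$ and gives nothing. On top of that, the witness states for different stages $j$ need not be disjoint, as you note. You correctly diagnose both difficulties but resolve neither, so the argument as written does not yield any lower bound beyond the trivial one.

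The paper avoids all of this with a simpler decomposition that you should compare against. It partitions the $2^{h-1}$ leaves into $2^{h-2}$ disjoint sibling pairs; for each pair it fixes every other leaf to a constant, obtaining a sub-BP that computes a two-variable sum $u +_k v$ (up to an additive constant), and it proves a clean $k$-state lower bound for that two-variable problem by the entropy method applied to the \emph{last edge} of each computation path --- there the output state is known and the last query fixes one operand, so the group structure determines the other and each fiber has size exactly $1$. The $k$ witness states for a given pair are states querying one of the two leaves of that pair, so disjointness across pairs is a syntactic fact about which variable a state queries, not a fact about where the state sits along the computation. This sidesteps both the re-reading issue and the stage-overlap issue entirely. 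If you want to salvage your global approach, the honest route is to first prove the two-variable lemma via the last-edge argument and then localize to disjoint variable pairs as the paper does; the stage-by-stage count over a single $\mathrm{SUM}_m$ instance would need a genuinely new idea to handle non-read-once programs.
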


We observe that the above lower bound is tight. Indeed, when the internal
operation is that of a group, the associativity property can be used to design
branching programs of size $O(2^h k)$, when the function at the internal nodes
is fixed.  When the function at the internal nodes is also a part of the input,
the upper bound is off by a factor of $k$, namely $O(2^h k^2)$.

The rest of the paper is organized as follows: In Section~\ref{sec:prelims} we
introduce the preliminaries needed for the paper. We prove the main result in
section~\ref{sec:bintbp}. We consider read-once nondeterministic thrifty
BPs in section~\ref{sec:rontbp}.  Further applications of the entropy method
are described in Section~\ref{sec:otherproofs}.

\section{Preliminaries} \label{sec:prelims}
For definitions of basic notions in complexity theory, we refer the reader to a
standard textbook~\cite{arora-barak,vollmertext}. 
Let $[k] = \{ 0, \ldots , k-1 \}$.  We give the formal definition
of tree evaluation problems first. We will use the term \emph{node} to refer to the 
vertices in the tree referred to by the 
input instance and the term \emph{state} to refer to vertices in the 
branching program.  In the following discussion, we label the
nodes of the tree using usual heap numbering.  The root node is labelled $1$
and for each internal node $i$, its left child is labelled $2i$ and right child
is labelled $2i + 1$.  We use $v_i$ to denote the value of the $i^{\text{th}}$
node in the input tree. When we are talking about a specific input $I$, we use
$v_{i}^{I}$ to denote the value of node $i$ of the input $I$.

We now define the function and boolean versions of the tree evaluation problem. 

\begin{definition}{(Tree Evaluation Problems \cite{wehr-tep})}
Input: The tree \T{h}{2} where each leaf node is associated with a value from
$[k]$ and each internal node $i$ is associated with a function $f_i : [k]^{2}
\mapsto [k]$, where $h, k \geq 2$

Output for \ft{h}{2}{k}: The value $v_1 \in [k]$ of the root node, where in
general $v_i = a$ if $i$ is a leaf and $a$ is the value associated with
$i^{\text{th}}$ node in the input and $v_i = f_i(v_{2i}, v_{2i+1})$ if $i$ is a
non-leaf node.

Output for \bt{h}{2}{k}: The value $v_1 \in \{0,1\}$ of the root node. The
evaluation rules are the same as for \ft{h}{2}{k}.
\end{definition} 

It is known that tree evaluation problems are in \LOGDCFL\ \cite{wehr-tep} (For
definition of \LOGDCFL\ see \cite{sud-logcfl}). Note that the input size when
represented in binary is $O(2^h k^2 \log k)$.  Since all values in the
definition of tree evaluation problems are $k$-ary, a general model to solve
tree evaluation problems is a branching program that queries $k$-ary variables.
Such branching programs are called \emph{$k$-way branching programs} (or BP in short in this paper), 
since each query has $k$ possible outcomes (depending on the value of the queried variable.). We
define these models formally now.  

\begin{definition}[$k$-way Branching Program (BP) \cite{wehr-tep}]
A nondeterministic $k$-way branching program $B$ for \ft{h}{2}{k} is a directed 
multi-graph.  It consists of a designated start state and $k$ final states labelled $0, \ldots , k-1$. 
A non-final state is labelled either $\ell_i$ where $i$ is a leaf node or labelled 
$f_i(x_1, x_2)$ where $i$ is an internal node, $x_1, x_2 \in [k]$, 
and each outgoing edge is labelled by an element from $[k]$. 
A computation path on input $I$ is a directed path from the start state and each 
edge in the path is consistent with $I$. At least one
such computation must end in the final state labelled \V{1}{I} and all computations ending in a final state 
must end in the final state labelled \V{1}{I}. The BP $B$ is deterministic if and
only if each query state has exactly $k$ outgoing edges labelled $0 , \ldots ,k-1$. 

A nondeterministic $k$-way branching program $B$ for \bt{h}{2}{k} is defined similarly except
that each query state labelled $f_1(x_1, x_2)$ where $x_1, x_2 \in [k]$ has all
of its outgoing edges labelled by either $0$ or $1$. There is a designated accepting state that has no outgoing edges. 
The BP $B$ solves \bt{h}{2}{k} if and only if for every ``yes''
instance $I$ it has at least one computation path from the start state to the accepting state consistent with $I$ 
(an accepting computation path)
 and for every ``no'' instance the BP has no accepting computation path.
The BP $B$ is deterministic if and only if each query
state labelled $f_1(x_1, x_2)$ has exactly two outgoing edges labelled $0$ and
$1$ and every other query state has exactly $k$ outgoing edges labelled $0,
\ldots , k-1$. \end{definition}

By a sub-BP $B^{\prime}$ of $B$ obtained by restricting input set $E$ to
$E^{\prime}$, we refer to the BP obtained from $B$ by removing edges not used
by inputs in $E^{\prime}$ and by shortcutting states for which only one
outgoing edge can be active when we consider computation on instances in
$E^{\prime}$.

The size of binary branching programs solving tree evaluation problems differ 
from the size of $k$-way branching programs by a factor of at most $k$. 
Therefore, a size lower bound of $\Omega(k^{r(h)})$ for $k$-way branching programs, where $r(h)$ is an
unbounded function, would separate \L\ from \LOGDCFL.

\begin{definition}[Nondeterministic Thrifty BP (NTBP) \cite{wehr-tep}]
A nondeterministic BP solving \bt{h}{2}{k} is called \emph{thrifty} if and
only if for any accepting computation path on any instance $I$ any query state
labelled $f_i(x_{1}, x_{2})$ satisfies $x_{1} = v_{2i}$ and $x_2 = v_{2i+1}$
(i.e., the internal nodes are queried only at the correct values of its
children).
\end{definition}

By a state querying node $i$ we mean that the state queries $f_i(x, y)$ for some 
$x$, $y \in [k]$ when $i$ is an internal node and that the state queries $\ell_i$ when 
$i$ is a leaf node.

\begin{definition}[Syntactic Read-Once NTBP (RONTBP)]
An NTBP solving \bt{h}{2}{k} is called \emph{syntactic read-once} if and only if 
any graph-theoretic path from the start state to the accept state queries each node 
at most once. 
\end{definition}

Let $N = 2^h - 2$ be the total number of non-root nodes in \T{h}{2}.  Let $B$
be a nondeterministic thrifty BP for \bt{h}{2}{k}.  Let $s$ be a state of $B$.
We define 

\begin{align*} F_s &= \{ (\V{2}{I}, \ldots , \V{N+1}{I}) \colon
\exists I \text{ and a computation path $C(I)$} \text{ such that } s \in C(I)
\}\\ A_s &= \{ (\V{2}{I}, \ldots , \V{N+1}{I})  \colon \exists I \text{ and an
accepting computation path $C(I)$} \text{ such that } s \in C(I) \}
\end{align*}

We use $\proj{S}{i}$ to denote the set of all $i^{\text{th}}$ component of the
tuples in $S$ (typically, $S$ is either $F_s$ or $A_s$ for some $s$.). That is,
the set formed by projecting the $i^{\text{th}}$ component of all tuples in
$S$.  For any encoding function $\phi : [k] \mapsto \{ 0, 1 \}^{\lceil \log k
\rceil}$, we use \bit{r}{i} to denote the $i^{\text{th}}$ bit of $r \in [k]$
when $r$ is encoded using $\phi$.

\begin{definition}[Bitwise-independent NTBP (BINTBP)]
Let $k = 2^{\ell}$ and let $B$ be a nondeterministic thrifty BP solving
\bt{h}{2}{k}. Then $B$ is called \emph{bitwise independent} if and only if there exists an
encoding function $\phi : [k] \mapsto \{0, 1\}^{\ell}$ such that for every
state $s$ in $B$ the following two conditions are satisfied.

\begin{align*} F_s &= \cart{i = 2}{N+1} \phi^{-1}\left(\cart{j = 1}{\ell}
\bit{\proj{F_s}{i}}{j}\right) \\A_s &= \cart{i = 2}{N+1} \phi^{-1}\left(\cart{j
= 1}{\ell} \bit{\proj{A_s}{i}}{j}\right) \end{align*}

Here the outer Cartesian product is the normal Cartesian product and the
inner one concatenates all the bits after forming the
Cartesian product.  When $k$ is not a power of 2, we consider the largest
power of 2 smaller than $k$. Let this be $2^{\ell}$.  Then $B$ is
bitwise independent if and only if the sub-BP $B^{\prime}$ of $B$ obtained by considering only inputs 
where all values are from $[2^\ell]$ is bitwise independent. 
\label{def:bintbp}
\end{definition}

The intuition is that at any state in the BP the bits of values of non-root nodes can be
partitioned into ``fixed'' bits and ``unfixed'' bits and the sets $F_s$ and
$A_s$ are such that all possible combinations of unfixed bits are in the set.
i.e., the BP cannot store implicit information about bits (such as, the second
bit is the complement of the first bit). 

If we only consider minimal bitwise independent nondeterministic thrifty BPs,
then we have $|F_s|, |A_s| \geq 1$ for any query state $s$. This is because any
query state $s$ that does not have any accepting computation path passing
through it can be removed. Also note that by the
definition of bitwise independence, for any $i$ and $s$, we have
$\proj{F_s}{i}$ and $\proj{A_s}{i}$ are always powers of two when $k$ is a
power of 2.

The following input set will be used to prove lower bounds for thrifty BPs.  We note that 
this set was also used in \cite{wehr-tep} to prove lower bounds for deterministic thrifty BPs 
solving \bt{h}{2}{k}.

\begin{definition}[Hard Inputs for Thrifty BP]
\begin{align*}
E = \{ I : & f_1^{I}(x, y) = 1 \text{ for all } x, y \in [k] \\
	& f_i^{I}(x, y) \in [k] \text{ if } x = \V{2i}{I} \text{ and } y = \V{2i+1}{I} \text{ for all internal nodes } i\\
	& f_i^{I}(x, y) = 0 \text{ if } x \neq \V{2i}{I} \text{ or } y \neq \V{2i+1}{I} \text{ for all internal nodes } i\\
	& \ell_i^{I} \in [k] \text{ for all leaf nodes } i\}
\end{align*}
\label{def:hard-inputs}
\end{definition}

Here we set $f_1$ to the constant function 1 and we allow all $k$-ary values to take arbitrary values if they can be queried by a 
thrifty BP and set them to 0 otherwise.  Note that all inputs in $E$ are ``yes'' instances and $|E| = k^N$.

We make the following observation about accepting computation paths for inputs in $E$ in any NTBP solving \bt{h}{2}{k}.

\begin{proposition}
Let $B$ be an NTBP solving \bt{h}{2}{k}.  Let $I \in E$ and let $C(I)$ be an accepting computation path for $I$ in $B$, then 
all nodes are queried in $C(I)$. 
\label{prop:thrifty-all-nodes-queried}
\end{proposition}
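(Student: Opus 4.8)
The plan is to argue by contradiction. Suppose some node is not queried along $C(I)$. I will manufacture a new instance $I'$ on which $B$ behaves badly: either $I'$ is a ``no'' instance admitting an accepting computation path, contradicting that $B$ solves $\bt{h}{2}{k}$, or $C(I)$ itself is an accepting computation path of $B$ on $I'$ that violates the thrifty condition, contradicting that $B$ is an \emph{NTBP}.

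First I would dispose of the root: if node $1$ is not queried in $C(I)$, replace $f_1^I$ by the constant-$0$ function and leave everything else unchanged to get $I'$; then $C(I)$ is still an accepting computation path (no query in it sees the root), but $v_1^{I'}=0$, a contradiction. So the root is queried, and I may assume some \emph{non-root} node is unqueried. Among all unqueried nodes pick one, $j$, of minimum depth, with parent $p=\lfloor j/2\rfloor$; minimality (as $p$ has strictly smaller depth than $j$) forces $p$ to be queried, say by a state labelled $f_p(x_1,x_2)$, and since $C(I)$ is accepting on $I$ and $B$ is thrifty we get $x_1=v_{2p}^I$ and $x_2=v_{2p+1}^I$. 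Now I would change only the value of node $j$, to some $v_j'\neq v_j^I$ --- via $\ell_j$ if $j$ is a leaf, via redefining $f_j$ at the single point $(v_{2j}^I,v_{2j+1}^I)$ if $j$ is internal --- obtaining $I'$ with $v_j^{I'}=v_j'$ and all values outside $j$'s subtree, in particular the sibling value of $j$, unchanged. The crucial verification is that $C(I)$ remains an accepting computation path of $B$ on $I'$: node $j$ is not queried in $C(I)$, and every other query in $C(I)$ is a leaf query at a node $\neq j$ (same value under $I$ and $I'$) or an internal query $f_i(\cdot,\cdot)$ with $i\neq j$ (same function, hence same answer, under $I$ and $I'$), so every edge stays consistent and $C(I)$ still ends at the accepting state. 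But then the query to $p$ in $C(I)$ sits at $(v_{2p}^I,v_{2p+1}^I)$ while the child of $p$ equal to $j$ has value $v_j'\neq v_j^I$ in $I'$, so the thrifty condition fails on $I'$ --- contradiction.

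The argument is structural, so I do not expect a genuine obstacle, but the one subtlety to get right is \emph{which} kind of contradiction to aim for at an interior node. One cannot hope to push the corruption of $v_j$ up to the root to make $I'$ a ``no'' instance, because $f_1$ is the constant-$1$ function for inputs in $E$ and re-floats the output to $1$ regardless of what happens below; hence for every node other than the root the contradiction must be drawn from the thrifty restriction itself rather than from correctness of $B$. The other point worth stating carefully is precisely why $C(I)$ survives the modification --- this is exactly where ``$j$ is unqueried'' and ``$B$ is thrifty, so every internal query of $C(I)$ is pinned to the children values of $I$'' are both used.
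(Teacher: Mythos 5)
Your proof is correct and is essentially the paper's argument: both dispose of the root by setting $f_1=0$, then perturb the value of an unqueried non-root node at its thrifty point and derive a contradiction from a non-thrifty query at its parent. The only cosmetic difference is that you organize the step as an extremal choice (a minimum-depth unqueried node, whose parent is therefore queried), whereas the paper runs the same step as an induction propagating ``the parent is also unqueried'' up to the root.
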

\begin{proof}
If the root node is not queried for some $I \in E$, then the input $I^{\prime}$ which is the same as $I$ but with $f_1 = 0$ 
is also accepted by $B$.  Let $i$ be some non-root node and assume that $C(I)$ does not have a state querying node $i$.  Then 
the input $I^{\prime}$ which is the same as $I$ but with $f_{i}^{I^{\prime}}(\V{2i}{I}, \V{2i+1}{I}) \neq f_{i}^{I}(\V{2i}{I}, \V{2i+1}{I})$ 
is also accepted by $C(I)$.  But then $C(I)$ makes a non-thrifty query when querying the parent node of $i$ for either $I$ or $I^{\prime}$.  
Therefore $C(I)$ does not query the parent of $I$.  By induction, we can conclude that $C(I)$ does not query the root node which is a 
contradiction.
\end{proof}

\subsection{Pebbling}

Pebbling sequences capture ``natural'' algorithms solving 
tree evaluation problems by computing the values of nodes 
of the tree in a bottom-up fashion. 
%Intuitively, such algorithms do not try to learn any properties of functions
%at internal nodes. 
\begin{definition}[Fractional Black-White, Whole Black-White and Black Pebbling \cite{wehr-tep}]
\label{def:pebbling}
A fractional black-white pebbling configuration of a rooted binary tree \T{h}{2} is an
assignment of a pair of real numbers $(b(i), w(i))$ to each node $i$ of the
tree. The values $b(i)$ and $w(i)$ are called the black and white pebble
values, respectively, of node $i$. We have for every $i$ \begin{align}
\label{eq:pebble-restr} b(i) + w(i) &\leq 1 \notag \\ 0 \leq b(i), w(i) &\leq 1
\end{align}

The legal pebble moves are as follows.

\begin{enumerate}
\item For any node $i$, decrease $b(i)$ arbitrarily.  \item For any node $i$,
increase $w(i)$ arbitrarily.  \item For any node $i$, if each child of $i$ has
pebble value $1$, then decrease $w(i)$ to 0.  \item For any node $i$, if each
child of $i$ has pebble value $1$, then increase $b(i)$ arbitrarily and 
simultaneously decrease the black pebble value of a child of $i$.
\end{enumerate}

The number of pebbles in a configuration is the sum over all nodes $i$ of $b(i)
+ w(i)$. A fractional black-white pebbling of \T{h}{2} using $p$ pebbles is a sequence of
(legal) fractional black-white pebbling moves on nodes of \T{h}{2} which starts and ends
with each node having pebble value $0$ and at some point the root node has
black pebble value $1$, and no configuration has more than $p$ pebbles.

A whole black-white pebbling is a fractional black-white pebbling such that for all configurations and all nodes $i$, 
we have $b(i), w(i) \in \{0,1\}$.

A black pebbling is a whole black-white pebbling such that for all configurations and for all nodes $i$, we have $w(i) = 0$.
\end{definition}

We now give an intuitive description of Definition~\ref{def:pebbling}.  
A black pebble value of $\epsilon$ at a node indicates that $\epsilon \log k$ bits of 
the value of that node is known to the BP. Similarly, a white pebble value of $\epsilon$ 
indicates that $\epsilon \log k$ bits of the value of that node has been guessed by the BP 
(equivalently, the BP needs to verify $\epsilon \log k$ bits of the value of that node). 
The pebbling rules capture the intuition that in order to compute or verify
(a fraction of) the value of any node, the BP must completely figure out (by
computing or guessing) the values of its children. 

It is known that $h$, $\lceil h/2 \rceil + 1$ and $h/2 + 1$ pebbles are
required for black pebbling, whole black-white pebbling and fractional
black-white pebbling (resp.) \T{h}{2} (See \cite{wehr-tep}, \cite{frac-pbl-lb}).

\begin{definition}[Read-Once Whole Black-White Pebbling]
A whole black-white pebbling $C_1, \ldots, C_t$ of \T{h}{2} is called \emph{read-once} if and only if for any node $n$ 
there exists $i$ and $j$, with $i < j$, such that
\begin{itemize}
\item For $k < i$, we have $b(n) = w(n) = 0$ for $C_k$.
\item The pebble values $b(n)$ and $w(n)$ remains same from $C_i$ through $C_j$ and either $b(n) \neq 0$ or $w(n) \neq 0$
\item For $k > j$, we have $b(n) = w(n) = 0$ for $C_k$.
\end{itemize}
\end{definition}
\subsection{Entropy Method} \label{subsec:entropy} We now formally describe the
entropy method introduced by Jukna and Z{\'a}k in \cite{JZ01}. We specialize the description slightly
to suit our application of the method. Let $B$ be a BP computing the
characteristic function of language $\lang{L}$.  Let $A$ be a particular set of
inputs and let $States(B)$ denote the set of all non-final states of the BP $B$. 
Define a ``distribution'' function $g: A \mapsto States(B)$.  Now
consider an arbitrary state $s$ in the range of $g$ and let $F = g^{-1}(s)$.
Define a decision tree $D$ such that each $a \in F$ reaches a unique leaf in
$D$.  Such a decision tree is called a `splitting tree' for $F$ in \cite{JZ01}.
The next step is to prove that $D$ has low depth which will imply that the
entropy of $F$, $h(F) = \log |F|$, is small.  Then we have $Size(B) \geq 2^{|A|
- h(F)}$. In defining $A$ and $g$, we may use properties of $\lang{L}$ and any
  restrictions imposed on the structure of $B$. The goal is to minimize the
maximum value of $h(F)$ over all choices of $F$ and at the time using an $A$ that is large enough to 
get the required lower bounds.

\section{Tight Bounds for RONTBP} \label{sec:rontbp}
We prove upper bounds for RONTBP by showing that they can implement read-once 
whole black-white pebbling to solve \bt{h}{2}{k}.

\begin{proposition} \label{prop:ropbl} There is a read-once whole black-white
pebbling of \T{h}{2} using $\lceil h/2 \rceil + 1$ pebbles.
\end{proposition}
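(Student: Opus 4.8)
The plan is to exhibit an explicit read-once whole black-white pebbling of $\T{h}{2}$ using $\lceil h/2 \rceil + 1$ pebbles, by induction on $h$. First I would recall the standard optimal whole black-white pebbling of $\T{h}{2}$ achieving $\lceil h/2 \rceil + 1$ pebbles: the idea is that to place a black pebble on the root we black-pebble one subtree recursively while keeping a single white pebble on the root of the other subtree (a white pebble can always be placed for free), and then we verify that white pebble by recursively pebbling its two children. The base cases $h = 1$ (a single node, $1$ pebble) and $h = 2$ (root with two leaves, $2$ pebbles) are immediate, and one checks they are read-once. For the inductive step, let the root be $r$ with children $u$ (left) and $v$ (right), each the root of a copy of $\T{h-1}{2}$. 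I would describe the move sequence: (i) put a free white pebble on $v$; (ii) recursively run the read-once strategy on the left subtree rooted at $u$ so that $u$ ends with black pebble value $1$; (iii) now both children of $r$ have pebble value $1$, so place a black pebble on $r$ by rule~4, simultaneously removing the black pebble from $u$ — at this instant $r$ has black pebble value $1$, fulfilling the requirement of Definition~\ref{def:pebbling}; (iv) remove (decrease to $0$) the black pebble on $r$; (v) to clear the white pebble on $v$ we must give $v$'s two children pebble value $1$, which we do by recursively running the read-once strategy on each of the two subtrees hanging below $v$ (each of height $h-2$), one after the other, placing a black pebble on the respective child; (vi) once both children of $v$ have value $1$, remove the white pebble on $v$ by rule~3, then remove the black pebbles on its children.

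Next I would verify the pebble count. In step (ii) the left subtree uses at most $\lceil (h-1)/2 \rceil + 1$ pebbles by induction, plus the one white pebble sitting on $v$, for a total of $\lceil (h-1)/2 \rceil + 2$. When $h$ is even, $\lceil (h-1)/2 \rceil = h/2$, giving $h/2 + 2 = \lceil h/2 \rceil + 2$ — one too many, so this naive split does not quite work and the recursion has to be set up more carefully (this is the main obstacle, see below). When $h$ is odd, $\lceil (h-1)/2 \rceil = (h-1)/2$, so step (ii) costs $(h-1)/2 + 2 = \lceil h/2 \rceil + 1$, and in step (v) each height-$(h-2)$ subtree costs $\lceil (h-2)/2 \rceil + 1 = (h-1)/2 + 1$, plus the white pebble on $v$ and the black pebble already placed on one child of $v$, again totalling at most $\lceil h/2 \rceil + 1$; so the odd case goes through directly.

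The main obstacle, then, is the even-$h$ case, where a symmetric left/right split loses a pebble. The fix is the familiar one from the whole black-white pebbling literature: rather than branching at the root, one "unrolls" two levels at a time, treating $\T{h}{2}$ as a root with four grandchildren, each the root of a $\T{h-2}{2}$; one keeps white pebbles on three of the four grandchildren (free), black-pebbles the fourth $\T{h-2}{2}$ recursively using $\lceil (h-2)/2 \rceil + 1 = h/2$ pebbles, which together with the three white pebbles is $h/2 + 3$ — still too many by the crudest accounting, so in fact one must stagger the white-pebble placements so that at most the right number of the grandchildren carry pebbles simultaneously, matching exactly the $h/2 + 1$ bound known to be optimal (\cite{wehr-tep}, \cite{frac-pbl-lb}). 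Concretely I would track, level by level, the configuration used in the known optimal whole black-white pebbling and simply check that the canonical such strategy never re-pebbles a node: each node is pebbled once, carries a fixed (black or white) value over a contiguous interval of configurations, and is then cleared forever. Since every node in that strategy is touched in a single contiguous "episode," it satisfies the three bullet conditions of Definition~\ref{def:pebbling} for read-once, completing the proof. I would finish by remarking that read-once-ness is automatic for the recursive construction because the recursion handles disjoint subtrees and the root/grandchild bookkeeping nodes are each engaged exactly once.
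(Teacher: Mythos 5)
Your proposal does not actually produce a working strategy, and the place where it stalls is exactly the heart of the proposition. The one\mbox{-}level recursion fails not only for even $h$ (as you note) but for odd $h$ as well: in your step (v), while the second height-$(h-2)$ subtree below $v$ is being pebbled you are holding the white pebble on $v$, a full pebble on the first child of $v$, and $\lceil (h-2)/2\rceil+1=(h+1)/2$ pebbles inside the second recursion, which already totals $(h+5)/2=\lceil h/2\rceil+2$ --- and this does not yet count the pebbles still sitting inside the first child's suspended sub-pebbling, which cannot all have been cleared if its root is to retain pebble value $1$. Your proposed repair (white pebbles on three of the four grandchildren) is also not the optimal strategy, and you concede that its count does not close either; the final fallback, ``track the canonical optimal strategy and check it never re-pebbles a node,'' is the right idea in spirit --- it is what the paper does --- but it is vacuous unless that strategy is actually exhibited. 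The missing ingredient is a \emph{strengthened induction hypothesis}: not merely that \T{h}{2} can be whole black-white pebbled with $\lceil h/2\rceil+1$ pebbles, but that the pebbling has a \emph{critical time} at which the root carries a black pebble while the entire tree carries at most $\lceil h/2\rceil$ pebbles. With this, the induction advances two levels at a time ($h\to h+2$): black-pebble node $4$ by a full recursive call, run the recursion on $T_5$ only up to its critical time, slide $4$'s pebble to node $2$, discard node $5$'s pebble and resume $T_5$ to completion; do the same for $T_6$ versus node $3$, place a single free white pebble on node $7$, slide to the root, and only afterwards verify node $7$ by fully pebbling $T_7$. The suspension at critical times (few pebbles held) is what keeps every intermediate configuration within budget; no accounting based on running recursive calls to their peaks simultaneously can work.

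The read-once verification is also glossed over. ``The recursion handles disjoint subtrees'' is not sufficient, because in the correct strategy the recursive pebblings of sibling subtrees are \emph{interleaved}: the pebbling of $T_5$ (and of $T_6$) is suspended partway through and resumed later, so a priori a node of $T_5$ could be pebbled, cleared, and pebbled again across the suspension. The property one must check --- and which the paper's argument isolates --- is that during a suspension no pebble inside the suspended subtree is removed, so that each node's pebbled interval remains a single contiguous block of configurations, as required by the three conditions in the definition of a read-once whole black-white pebbling.
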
 \begin{proof} Cook et. al. \cite{wehr-tep} has given a whole
black-white pebbling of \T{h}{2} using ${\lceil h/2 \rceil + 1}$ pebbles.  We
use $T_i$ to denote the subtree rooted at node $i$. The pebbling strategy in 
\cite{wehr-tep} is given below for completeness.  We describe the pebbling 
procedure for height $h+2$ tree assuming height $h$ tree has a whole black-white pebbling procedure.  
The induction hypothesis is that \T{h}{2} can be pebbled using $N(h) = \lceil h/2 \rceil + 1$ pebbles and 
there is a critical time in the pebbling of \T{h}{2} such that the root node has a black 
pebble and the tree has at most $N(h) - 1$ pebbles.  This is true for \T{2}{2} because we 
can place two black pebbles on leaves and then slide one to the root and remove the other.  
Now the root has a black pebble and the tree has $N(h) - 1 = 1$ pebble on it.

\begin{enumerate}
\item Place a black pebble on node $4$ by running the pebbling procedure 
on $T_4$.
\item Run the pebbling procedure on $T_5$, Stop when node $5$ has a black 
pebble on it.
\item Slide the black pebble on node $4$ to node $2$.
\item Remove black pebble on node $5$.
\item Resume the pebbling for $T_5$ and run it to completion.
\item Run the pebbling procedure on $T_6$ and suspend when node $6$ has 
a black pebble.
\item Place a white pebble on node $7$.
\item Slide the black pebble on node $6$ to node $3$.
\item Slide the black pebble on node $2$ to root node.
\item Remove black pebble from node $3$. (This is the critical time for \T{h+2}{2})
\item Remove black pebble from root node.
\item Resume the pebbling for $T_6$ and run it to completion.
\item Remove the white pebble on node $7$ by running the pebbling procedure 
for $T_7$.
\end{enumerate}

It is easy to see that this pebbling strategy is read-once.  In particular, we
stress that the pebbling strategy only suspends the pebbling of subtrees and
does not remove any pebbles from it until the pebbling for those subtrees are
resumed (This is being done in Steps~(2) and (5) and Steps~(6) and (12)).  
Since \T{2}{2} can be pebbled using $\lceil 2/2 \rceil + 1 = 2$
pebbles in a read-once fashion, it follows by induction that the above pebbling strategy for \T{h}{2}
is read-once.  \end{proof}

\begin{theorem}
\label{thm:rontbp-upper-bound}
There is a RONTBP solving \bt{h}{2}{k} using at most $(2^h - 1) k^{\lceil h/2 \rceil + 1}$ states.
\end{theorem}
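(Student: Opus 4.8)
The plan is to build the RONTBP so that it directly simulates, move by move, the read-once whole black--white pebbling $\mathcal{P}=(C_0,\dots,C_T)$ of \T{h}{2} with $p:=\lceil h/2\rceil+1$ pebbles guaranteed by Proposition~\ref{prop:ropbl}. Because $\mathcal{P}$ is read-once and whole, each node $n$ carries a pebble on a single contiguous stretch of configurations, on which $(b(n),w(n))$ is constant and equal to either $(1,0)$ (call $n$ a \emph{black} node) or $(0,1)$ (a \emph{white} node); moreover the recursive pebbling of Proposition~\ref{prop:ropbl} places white pebbles only on internal nodes. The states of the branching program are, essentially, the query events of $\mathcal{P}$, each annotated with the $[k]$-values of the nodes currently carrying a pebble: the computed value of a black node, the guessed value of a white node. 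In addition there is one $\accstate$ and one non-accepting dead sink.

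Each pebbling move is simulated as follows. Placing a black pebble on a leaf $i$ is a query to $\ell_i$ whose outgoing edges record the answer as the value of node $i$. Placing a black pebble on an internal node $i$ (rule 4, sliding a child's black pebble up) is the query $f_i(v_{2i},v_{2i+1})$; this is thrifty because both children of $i$ carry pebble value $1$ at that moment, so their values are precisely what the state has recorded and are the arguments being queried, and the outgoing edges record the answer as the value of node $i$ and drop the slid child's value. Removing a white pebble from an internal node $i$ (rule 3) is again the thrifty query $f_i(v_{2i},v_{2i+1})$, whose answer is compared with the recorded guess for node $i$: on a mismatch the edge goes to the dead sink, otherwise it continues and drops the value of $i$. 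Placing a white pebble (rule 2, a free guess) and removing a black pebble (rule 1, a free drop) create no new states: a guess is folded into the outgoing edges of the immediately preceding query state by multiplying its out-degree by a factor $k$ (the $k$ copies keep the original label but record different guessed values; a wrong guess is later caught at the corresponding rule-3 query and routed to the dead sink), and a drop simply re-targets the edges to states carrying the smaller annotation. At the unique move that gives the root black pebble value $1$ we issue the thrifty query $f_1(v_2,v_3)$; answer $0$ goes to the dead sink, answer $1$ continues. Only after $\mathcal{P}$ has been run to its empty final configuration do we pass to $\accstate$.

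For correctness, on a ``yes''-instance $I$ we run $\mathcal{P}$ making every guess equal to the true value $v_n^I$; a bottom-up induction along $\mathcal{P}$ shows every recorded value is the true one, every query is thrifty, the root query returns $1$, and we reach $\accstate$. Conversely any path to $\accstate$ has completed $\mathcal{P}$, hence has survived the rule-3 verification of every white pebble, hence recorded only true values, so its root query returned $f_1^I(v_2^I,v_3^I)=1$ and $I$ is a ``yes''-instance; thus no ``no''-instance has an accepting path and $B$ solves \bt{h}{2}{k}. The program is (syntactic) read-once because its state graph is a DAG layered by the index into $\mathcal{P}$, so a graph-theoretic path traverses the moves of $\mathcal{P}$ in order, and by the black/white dichotomy each node is queried exactly once along it: at the move that pebbles it if it is black or a leaf, and at the move that verifies it if it is white.

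Finally, the count. A query state that queries an internal node $i$ is determined by its label $f_i(v_{2i},v_{2i+1})$ (which already pins down the two children's values), the values of the remaining pebbled nodes (at most $p-2$ of them, since $i$'s two children are among the $\le p$ pebbled nodes), and, for a rule-3 verification state, the recorded guess for $i$; this gives at most $k^{p}$ such states per internal node, and at most $k^{p-1}$ per leaf. Summing over the $2^{h-1}-1$ internal nodes, the $2^{h-1}$ leaves, and the two distinguished states yields at most $(2^{h-1}-1)k^{p}+2^{h-1}k^{p-1}+2\le (2^h-1)k^{\lceil h/2\rceil+1}$ states (using $k\geq 2$), as required. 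I expect the only parts needing care to be the soundness point above --- the state $\accstate$ must be reached only after \emph{all} white pebbles have been verified, since otherwise a lucky wrong guess feeding into a query could let a ``no''-instance through --- and this last bookkeeping, where folding guesses and drops into edges rather than into fresh states, together with identifying each query state with the variable it queries, is exactly what keeps the exponent at $p$ rather than $p+1$; the rest is the routine pebbling-to-branching-program translation once Proposition~\ref{prop:ropbl} is in hand.
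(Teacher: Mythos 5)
Your proposal is correct and takes essentially the same route as the paper: both simulate the read-once whole black--white pebbling of Proposition~\ref{prop:ropbl} with states annotated (``tagged'') by the $[k]$-values of the currently pebbled nodes, fold the free place-white/remove-black moves into edges rather than fresh states, and bound the size by (number of query events) $\times$ $k^{\lceil h/2\rceil+1}$. The soundness point you flag (accept only after every white pebble is verified) and the per-node counting are exactly the paper's argument in slightly different bookkeeping.
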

\begin{proof} 
The construction uses the same idea used by Cook et. al. in \cite{wehr-tep} 
to construct an NTBP solving \bt{h}{2}{k}.  
Let $C_1, \ldots, C_t$
be the read-once whole black-white pebbling of \T{h}{2} given by
Proposition~\ref{prop:ropbl}.  We now describe a RONTBP $B$ that uses this
read-once pebbling to solve \bt{h}{2}{k}.  The RONTBP $B$ has $t$ layers
numbered $1$ to $t$.  The first layer consists of only the start state and
the last layer consists of only the accepting state.  Let $B_i$ denote the set
of all nodes with a black pebble on them in pebbling configuration $C_i$ and
let $W_i$ denote the set of all nodes with a white pebble on them in $C_i$.
The $i^{\text{th}}$ layer of $B$ has $k^{|B_i| + |W_i|}$ states.  We ``tag''
each state in layer $i$ with a set of possible values for these pebbled nodes
such that we will have exactly one state for each setting of possible values
for these pebbled nodes.  We stress that this ``tag'' is only used to make the
description of the RONTBP easier.  For a state $s$, we denote its tag by
$tag(s)$.  We can now desribe the labelling of states of $B$ and the edges of
$B$ (which will correspond to the pebbling moves) easily using these tags.  We
describe the edges from layer $i$ to layer $i+1$ in terms of the pebbling move
that takes the pebbling configuration $C_i$ to $C_{i+1}$.  

\begin{description}
\item [Place a black pebble on $j$] All states in layer $i$
are labelled $\ell_j$ if $j$ is a leaf node.  Otherwise each state $s$ in layer $i$ is labelled 
$f_j(x, y)$ where $x$ and $y$ are the values of $v_{2j}$ and $v_{2j+1}$
respectievely in $tag(s)$.  We direct the outgoing edge labelled $v$ from $s$
to the state $s^{\prime}$ in layer $i+1$ such that $tag(s^{\prime}) = tag(s)
\cup \{ v_j = v \}$ for each $v \in [k]$.

\item [Place a white pebble on $j$] All states in layer $i$ are unlabelled (they 
are ``guess'' states) and all edges from layer $i$ to layer $i+1$ are unlabelled.  
From each state $s$ in layer $i$, add an unlabelled outgoing edge from $s$ 
to each $s^{\prime}$ in layer $i+1$ that satisfies $tag(s^{\prime}) = tag(s) \cup \{ v_j = v \}$ for some 
$v \in [k]$.

\item [Remove a black pebble from $j$] All states in layer $i$ are unlabelled (they are 
``forget'' states).  From each state $s$ in layer $i$ such that $tag(s)$ contains $v_j = v$ for some 
$v \in [k]$, we add an unlabelled edge to each $s^{\prime}$ in layer $i+1$ 
that satisfies $tag(s^{\prime}) = tag(s) - \{ v_j = v \}$.

\item [Remove a white pebble from $j$] For each state $s$ where $tag(s)$ contains 
$v_j = v$ for some $v \in [k]$, we label $s$ with $\ell_j$ if $j$ is a leaf node. 
Otherwise $j$ is an internal node and we label $s$ with $f_j(x, y)$ where $x$ and $y$ are the 
values of $2j$ and $2j+1$ in $tag(s)$.  We add a single outgoing edge from $s$ labelled 
$v$ to the state $s^{\prime}$ in layer $i+1$ such that 
$tag(s^{\prime}) = tag(s) - \{ v_j = v \}$.

\item [Slide a black pebble from $j$ to is parent $j^{\prime} = \lfloor j/2 \rfloor$]
Each state $s$ in layer $i$ is labelled $f_{j^{\prime}}(x, y)$ where $x$ and $y$ are the 
values of $j$ and its sibling in $tag(s)$.  Add $k$ outgoing edges from $s$ labelled $0$ 
to $k-1$ such that the edge labelled $v$ is directed to $s^{\prime}$ in layer $i+1$ such that 
$tag(s^{\prime}) = tag(s) \cup \{ v_{j^{\prime}} = v \} - \{ v_j = x \}$ (assuming value of $j$ is 
$x$). 

\item [Slide a black pebble to the root node]
Each state $s$ in layer $i$ is labelled $f_{1}(x, y)$ where $x$ and $y$ are the 
values of nodes $2$ and $3$ in $tag(s)$.  Add an outgoing edge labelled $1$ to $s^{\prime}$ in 
layer $i+1$ such that $tag(s^{\prime}) = tag(s) - \{ v_2 = x \}$ (assuming that the black pebble was 
slid from node $2$ to the root). 
\end{description}

We can remove the unlabelled states with unlabelled edges by the following procedure.  If there is an 
edge labelled $v$ from some state $s$ to an unlabelled state $s^{\prime}$ with $e$ outgoing edges, 
then remove the state $s^{\prime}$ and add $e$ outgoing edges labelled $v$ from $s$ to the out-neighbors of 
$s^{\prime}$.  It is easy to see that this BP computes the same function as the original one.  The number of 
non-final states of $B$ is at most $(2^h - 1) k^{\lceil h/2 \rceil + 1}$ as there are $(2^h - 1)$ layers and any layer 
contains at most $k^{\lceil h/2 \rceil + 1}$ states.  Finally, the BP $B$ is a RONTBP since it implements 
a read-once pebbling of \T{h}{2}.
\end{proof}

We now prove tight lower bounds for size of RONTBPs solving \bt{h}{2}{k}.  The idea is to associate the computation 
of a RONTBP with a whole black-white pebbling.  We associate a whole black-white pebbling configuration with each 
state in the RONTBP such that if we take an accepting computation path of any instance in $E$, the sequence of 
pebbling configurations along the computation path is a valid pebbling of \T{h}{2}.  
Then we proceed to show that if we consider a state 
$s$ that has at least $\lceil h/2 \rceil + 1$ pebbles on a computation path (such a state exists on any 
accepting computation path), then the number of inputs reaching $s$ must be small.  
In particular, for any input $I$ on an accepting computation path reaching $s$, the 
values of pebbled nodes can be inferred from the state $s$ and the values of unpebbled nodes.  
This shows that the state $s$ must encode an element 
from a set of $k^{p}$ values where $p$ is the number of pebbled nodes.  The lower bound follows.

The following definition tells us how to extract a whole black-white pebbling from a RONTBP.

\begin{definition}[Pebbling Configuration at a State]
\label{def:rontbp-pbl-def}
Let $B$ be a RONTBP solving \bt{h}{2}{k}.  
Let $I \in E$ be arbitrary and let $C(I)$ be an arbitrary accepting computation path for $I$ in $B$.  Then the pebble 
value of a non-root node $i$ with parent $i^{\prime} = \lfloor i/2 \rfloor$ in the pebbling configuration 
associated with $s$ is defined as 
\begin{itemize}
\item If the state querying node $i^{\prime}$ comes after $s$ (or $i^{\prime}$ is queried by $s$) and the state querying node $i$ 
comes before $s$ in $C(I)$, then the node $i$ is black pebbled at state $s$.
\item If the state querying node $i$ comes after $s$ (or $i$ is queried by $s$) and the state querying node $i^{\prime}$ 
comes before $s$ in $C(I)$, then the node $i$ is white pebbled at state $s$.
\item Otherwise, the node $i$ is unpebbled at state $s$.
\end{itemize}
\end{definition}

For deriving a pebbling of \T{h}{2} from $C(I)$, we can assume that the root node is pebbled and 
unpebbled at the state immediately following the state querying the root node.  This does not affect 
the lower bound since the value at the root node is always 1 for any input in $E$.

Now we show that the pebbling configuration at some state $s$  defined above is independent of the input 
$I$ and the accepting computation path $C(I)$ that passes through state $s$.  In other words, this shows that 
the pebbling configuration at a state only depends upon the state $s$.  Note that if there are no accepting computation 
paths passing through a state $s$ in a RONTBP, then that state can be deleted from the RONTBP.

\begin{lemma}
\label{lem:pbl-indep}
Let $B$ be a RONTBP solving \bt{h}{2}{k}. Then the pebbling configuration at any state $s$ in $B$ depends only on $s$.  In particular, it is independent of any 
input $I$ and any accepting computation path $C(I)$ used to define it. 
\end{lemma}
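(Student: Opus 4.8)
The plan is to show that the three defining conditions in Definition~\ref{def:rontbp-pbl-def} (black-pebbled, white-pebbled, unpebbled) are determined entirely by the graph-theoretic structure of $B$ near the state $s$, independent of which input $I \in E$ and which accepting computation path $C(I)$ through $s$ we picked. The key leverage is Proposition~\ref{prop:thrifty-all-nodes-queried}: on \emph{any} accepting computation path for \emph{any} $I \in E$, every node of \T{h}{2} is queried exactly once (``at most once'' by the syntactic read-once property, ``at least once'' by the Proposition). So for a fixed non-root node $i$ with parent $i'$, every accepting computation path contains exactly one state querying $i$ and exactly one state querying $i'$, and the pebble value assigned to $i$ at $s$ depends only on whether those two query-states fall before, at, or after $s$ along the path.

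First I would fix $s$ and consider two accepting computation paths $C(I)$ (for $I \in E$) and $C(I')$ (for $I' \in E$) both passing through $s$. I want to argue that for each non-root node $i$, the query-state for $i$ occurs before $s$ on $C(I)$ if and only if it occurs before $s$ on $C(I')$ — and similarly for $i'$ — which by Definition~\ref{def:rontbp-pbl-def} immediately forces the pebble value of $i$ to agree. The natural way to do this is by contradiction and by ``splicing'': suppose the query-state $q_i$ for node $i$ appears before $s$ on $C(I)$ but after $s$ (or is $s$ itself on one side only) on $C(I')$. Walk along $C(I)$ from the start to $s$; walk along $C(I')$ from $s$ to the accepting state; concatenating gives a valid graph-theoretic walk from start to accept in $B$. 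If this concatenation is actually a simple path (no repeated state), then it is a legal path from start to accept that queries node $i$ twice — once in its prefix part (from $C(I)$) and once in its suffix part (from $C(I')$, since $q_i$ came after $s$ there) — contradicting the syntactic read-once property. The remaining worry is that the splice might repeat a state; I would handle that by noting that a repeated state can always be short-cut to produce a simple start-to-accept path, and short-cutting only deletes states, so it cannot create the double-query but also cannot destroy it if I pick the two occurrences of $q_i$ carefully — alternatively, one shows that if a state is repeated then it lies on a cycle reachable in an accepting path, and such cycles can be removed without affecting the count of queries to $i$.

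Next, with the before/after classification of the query-states for $i$ and for $i'$ pinned down as functions of $s$ alone, the three-way case split in Definition~\ref{def:rontbp-pbl-def} is a deterministic function of $s$: node $i$ is black-pebbled at $s$ exactly when ($q_i$ is before $s$ or $q_i = s$) and ($q_{i'}$ is after $s$ or $q_{i'} = s$); white-pebbled in the mirror case; unpebbled otherwise. The same argument applied with $i'$ playing the role of the child handles the parent's query-state. For the root node, the convention already stated (it is pebbled and immediately unpebbled right after its query-state) is trivially a function of $s$. Hence the entire configuration $(b(j), w(j))_j$ at $s$ is independent of $I$ and $C(I)$, which is the claim. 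It is also worth remarking in the proof that the configuration is well-defined at \emph{every} state of $B$ through which some accepting computation path for some $I \in E$ passes; states with no such path can be deleted, as noted before the lemma.

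The main obstacle I anticipate is the splicing argument's edge case: concatenating a prefix of $C(I)$ with a suffix of $C(I')$ need not be a simple path in a nondeterministic BP, so I cannot directly invoke the syntactic read-once hypothesis (which quantifies over graph-theoretic \emph{paths}, i.e.\ simple ones). The fix — cleaning up the spliced walk into a genuine simple start-to-accept path while preserving the fact that node $i$ gets queried on both the ``old $C(I)$'' side and the ``old $C(I')$'' side of $s$ — is the delicate part, and I would be careful to choose the splice point and the two query-state occurrences so that the clean-up provably cannot merge or delete both of them. A cleaner alternative, which I would fall back on if the splicing gets technical, is to argue directly from read-once-ness that along any accepting computation path the sequence in which nodes are queried is consistent across inputs at the level of ``has node $i$ been queried yet'' when we arrive at $s$ — essentially that the ``queried-set'' attached to $s$ is invariant — and then read off the pebble configuration from that invariant set together with the tree's parent relation.
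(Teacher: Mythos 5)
Your proposal matches the paper's proof: both fix two accepting computation paths through $s$, splice the prefix of one with the suffix of the other at $s$, and use the resulting double query of node $i$ to contradict the syntactic read-once property, so that the before/after position of each node's (unique, by Proposition~\ref{prop:thrifty-all-nodes-queried} plus read-once-ness) query state relative to $s$, and hence the pebble value, is invariant. The edge case you flag --- the spliced walk failing to be a simple path --- does not actually arise, since a branching program is acyclic and so every start-to-accept walk is automatically a simple path; the paper accordingly splices without further comment.
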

\begin{proof}
Let $I$ and $I^{\prime}$ be two inputs in $E$ with accepting computation paths $C$ and $C^{\prime}$ passing through the state $s$.  Consider an arbitrary node 
$i$ and we argue that the pebble value of node $i$ with respect to $C$ is the same as pebble value of node $i$ with respect to $C^{\prime}$.  Let 
$i^{\prime} = \lfloor i/2 \rfloor$ be the parent of $i$.  We consider three cases based on the pebble value of node $i$ at $s$ with respect to $C^{\prime}$.  

\begin{description}
\item [Node $i$ is black pebbled] By Definition~\ref{def:rontbp-pbl-def}, we have a state $r$ querying $i$ before $s$ and a state $t$ querying $i^{\prime}$ after 
$s$ in the computation path $C$ (It is possible that $t = s$).  
Now the state $r^{\prime}$ querying $i$ on $C^{\prime}$ must precede state $s$ in the computation path $C^{\prime}$.  Otherwise the path 
$start \leadsto r \leadsto s \leadsto r^{\prime} \leadsto accept$ is a path in $B$ that queries node $i$ twice which is not possible since $B$ is a RONTBP.  
Similarly, the state $t^{\prime}$ querying $i^{\prime}$ must come after $s$ on $C^{\prime}$ (It is possible that $t = t^{\prime} = s$).

\item [Node $i$ is white pebbled] By Definition~\ref{def:rontbp-pbl-def}, we have a state $r$ querying $i^{\prime}$ before $s$ and a state $t$ querying $i$ after 
$s$ in the computation path $C$ (It is possible that $t = s$).  Now the state $r^{\prime}$ querying $i$ on $C^{\prime}$ must come before 
state $s$ in the computation path $C^{\prime}$.  Otherwise the path $start \leadsto r \leadsto s \leadsto r^{\prime} \leadsto accept$ is 
a path in $B$ that queries node $i$ twice which is not possible since $B$ is a RONTBP.  
Similarly, the state $t^{\prime}$ querying $i^{\prime}$ must come before $s$ on $C^{\prime}$ (It is possible that $t = t^{\prime} = s$).

\item [Node $i$ is not pebbled] We have two cases to consider.
\begin{description}
\item [Nodes $i$ and $i^{\prime}$ are queried before $s$ (With one of them possibly queried at $s$)] On $C^{\prime}$ both $i$ and $i^{\prime}$ must be 
queried before (or at) $s$ as otherwise we can construct a path from start state to accepting state that queries some node at least twice.
\item [Nodes $i$ and $i^{\prime}$ are queried before $s$ (With one of them possibly queried at $s$)] On $C^{\prime}$ both $i$ and $i^{\prime}$ must be 
queried after (or at) $s$ as otherwise we can construct a path from start state to accepting state that queries some node at least twice.

\end{description} 
\end{description}
The lemma follows.
\end{proof}

We now describe an algorithm FindPebbled that outputs a list of candidate values for the non-pebbled nodes at a state.  We 
describe the algorithm using nondeterminism.  In each nondeterministic path the algorithm may or may not output candidate 
values for the non-pebbled nodes.  The list output by the algorithm consists of all outputs taken over all nondeterministic paths.  
The nondeterminism can be easily eliminated using standard techniques.  We stress that the efficiency of the algorithm is of no concern 
as the algorithm is only a tool for proving the lower bound for RONTBPs.

\begin{algorithm}
\SetKwData{CurState}{curstate}
\SetKwData{Nil}{nil}
\SetKwInOut{Input}{input}\SetKwInOut{Output}{output}

\Input{$s$: A state in the RONTBP $B$,  $I_n$: The values of nodes that are not pebbled at $s$ for an input $I \in E$ that has an accepting computation path passing through $s$}
\Output{$LI_p$: A list of possible values for nodes that are not pebbled at $s$ such that there is a consistent path (with $I_n | I_p$) from $s$ to the accepting state}

\CurState$\leftarrow s$\;
\While{\CurState is not the accept state or \Nil}{
Let \CurState query $f_{i}(x, y)$\;
\tcc{Note that nodes $2i$ and $2i+1$ cannot be white pebbled at $s$ as it would mean that the node $i$ was queried before $s$ in $B$}
\If{node $2i$ is black pebbled at $s$}{
Set the value of node $2i$ to $x$
}
\ElseIf{node $2i$ is not pebbled at $s$ and the value of node $2i$ in $I_n$ is not $x$}{
\CurState $\leftarrow$ \Nil\;
Exit while loop\;
}
\If{node $2i+1$ is black pebbled at $s$}{
Set the value of node $2i+1$ to $y$
}
\ElseIf{node $2i+1$ is not pebbled at $s$ and the value of node $2i+1$ in $I_n$ is not $y$}{
\CurState $\leftarrow$ \Nil\;
Exit while loop\;
}
\tcc{Note that the node $i$ cannot be black pebbled at $s$ as it would mean that the node $i$ was queried before $s$ in $B$}
\If{$i$ is white pebbled at $s$}{
Nondeterministically follow one of the outgoing edges from \CurState after setting the value of node $i$ to the label of the outgoing 
edge chosen and set \CurState to the new state
}
\ElseIf{$i$ is not pebbled at $s$}{
Nondeterministically follow one of the outgoing edges consistent with the value of node $i$ in $I_n$ 
and set \CurState to the new state
}
}
\If{\CurState is the accept state}{
Output the values for nodes that are not pebbled found in this nondeterministic path
}
\caption{FindPebbled}
\end{algorithm}

\begin{algorithm}
\SetKwInOut{Input}{input}\SetKwInOut{Output}{output}
\SetKwFunction{FindPebbled}{FindPebbled}

\Input{$s$: A state in the RONTBP $B$,  $I_n$: The values of nodes that are not pebbled at $s$ for an input $I \in E$ that has an accepting computation path passing through $s$}
\Output{$I$: The unique input in $E$ that is consistent with $I_n$ and has an accepting computation path through $s$}

$LI_p \leftarrow$ \FindPebbled{$s$, $I_n$}\;
\For{each $I_p$ in $LI_p$}{
Simulate $B$ with $I = I_n | I_p$ and if there is an accepting computation path for $I$ through $s$ output $I$
}
\caption{FindInput}
\label{alg:FindInput}
\end{algorithm}
\begin{lemma}
\label{lem:state-input}
Let $B$ be a RONTBP solving \bt{h}{2}{k}.  Let $I \in E$ and let $C(I)$ be an accepting computation path 
for $I$ that passes through a state $s \in States(B)$.  Let $p$ be the number of pebbled non-root nodes in the 
pebbling configuration associated with $s$.  Then Algorithm~\ref{alg:FindInput} outputs $I$ when given 
as input the state $s$ and $N - p$ values in $[k]$ which correspond to values of nodes in $I$ 
that do not have any pebble in the configuration associated with $s$. 
\end{lemma}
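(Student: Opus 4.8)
The plan is to verify the two algorithms in the order Algorithm~\ref{alg:FindInput} invokes them: first that FindPebbled, fed $s$ together with the correct non-pebbled values $I_n=(\V{i}{I})_{i\text{ not pebbled at }s}$, puts the genuine tuple $I_p^{\ast}=(\V{i}{I})_{i\text{ pebbled at }s}$ into its output list $LI_p$, and then that the verification loop of Algorithm~\ref{alg:FindInput} retains exactly $I$. Two facts are used throughout. First, every input in $E$ is determined by its node-value tuple $(\V{2}{\cdot},\dots,\V{N+1}{\cdot})$: the leaf values are node values, and for $I\in E$ the remaining data $f_1\equiv 1$ and $f_i$ ($=\V{i}{I}$ at the correct pair of children, $0$ elsewhere) are fixed by the node values, so recovering all node values recovers $I$. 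Second, by Proposition~\ref{prop:thrifty-all-nodes-queried}, Lemma~\ref{lem:pbl-indep} and read-onceness, along any accepting computation path of an input in $E$ through $s$ every node is queried, the induced pebbling configuration at $s$ is the one associated with $s$, and the partition of the nodes into those queried before $s$ and those queried at or after $s$ is independent of the input and the path. To establish the first claim I would run FindPebbled along the branch that, from $s$, follows the suffix of $C(I)$, choosing at every white-pebbled node the edge $C(I)$ uses. Walking that suffix and invoking thriftiness at each query state $f_i(x,y)$ (so $x=\V{2i}{I}$, $y=\V{2i+1}{I}$): a black-pebbled child is read off correctly from $x$ or $y$ at the post-$s$ query of its parent $i$; a child cannot be white-pebbled at $s$ (that would put a query of $i$ before $s$, impossible by read-onceness since $i$ is queried now); a non-pebbled child matches $I_n$, so no equality test fails; and $i$ itself is either white-pebbled — its value being the label of the edge $C(I)$ takes, $f_i^{I}(x,y)=\V{i}{I}$ (or $1$ at the root) — or non-pebbled, in which case the branch follows the $I_n$-consistent edge of $C(I)$; it cannot be black-pebbled. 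Leaf and root queries are handled analogously (the root via the convention of Definition~\ref{def:rontbp-pbl-def}). As $C(I)$'s suffix ends at the accept state, so does this branch, and it outputs $I_p^{\ast}$.

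Given $I_p^{\ast}\in LI_p$, Algorithm~\ref{alg:FindInput} on this candidate rebuilds the node-value tuple of $I$, hence $I$, and verifies it, since $C(I)$ is an accepting path through $s$; so $I$ is output. What remains is to show that no other input is output — equivalently, that any two inputs $I,\widehat I\in E$ possessing accepting computation paths through $s$ and agreeing on all nodes not pebbled at $s$ must be equal. Here I would walk the suffix of $C'$ (an accepting path of $\widehat I$ through $s$) and prove inductively, in the order nodes are queried there, that $\V{m}{\widehat I}=\V{m}{I}$ for every node $m$ queried at or after $s$. At a query state $f_m(x,y)$ on this suffix, thriftiness for $\widehat I$ and for $I$ reads the same $(x,y)$ off the label, so the children of $m$ get equal values under $\widehat I$ and $I$; the edge taken is labelled $\V{m}{\widehat I}$, which equals $\V{m}{I}$ at once when $m$ is non-pebbled (both are $I_n[m]$) and vacuously when $m$ is black-pebbled (then $m$ is queried before $s$, impossible).

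The delicate, and I expect principal, obstacle is the white-pebbled case: when $m$ is white-pebbled at $s$, nothing visible on the suffix forces $\V{m}{\widehat I}$, since for inputs in $E$ the value of an internal node is a free parameter, independent of its children. The route I would take is to exploit that the parent $m'=\lfloor m/2\rfloor$ is queried \emph{before} $s$ on both $C'$ and $C(I)$ and that its query is thrifty, so $\V{m}{\widehat I}$ is hard-wired into a state on the part of $C'$ preceding $s$; read-onceness then forbids any further query of $m'$, so the prefix of $C'$ cannot be re-routed to carry a different value for $m$, and — using that on inputs in $E$ a non-correct argument sends an internal function to $0$ — one should be able to force $\V{m}{\widehat I}=\V{m}{I}$ and conclude $\widehat I=I$. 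Making this rigorous, i.e.\ ruling out an accepting path through $s$ for an input that differs from $I$ only on white-pebbled nodes, is the crux; the rest is bookkeeping over Proposition~\ref{prop:thrifty-all-nodes-queried}, Lemma~\ref{lem:pbl-indep} and Definition~\ref{def:rontbp-pbl-def}.
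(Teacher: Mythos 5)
Your first half --- that FindPebbled, run along the branch tracing the suffix of $C(I)$, places the true values of the pebbled nodes into $LI_p$, so that $I$ itself is output by the verification loop --- is fine, and is essentially the existence part that the paper's own proof leaves implicit. The trouble is the uniqueness half, and you have correctly located but not closed the hole there. Two concrete problems. First, your induction walks only the suffix of $C'$, so it never establishes $\V{m}{\widehat I}=\V{m}{I}$ for \emph{black}-pebbled nodes $m$: those are queried strictly before $s$, your case analysis dismisses them as ``vacuous,'' yet they are pebbled, hence not supplied in $I_n$, hence exactly among the values whose uniqueness must be proved. Second, you explicitly concede that the white-pebbled case is the unresolved crux. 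As written, the argument establishes nothing beyond agreement on the non-pebbled nodes, which is the hypothesis.

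The missing idea --- and it is the entire content of the paper's proof of this lemma --- is a splice (cut-and-paste) argument that treats the two pebble colours symmetrically. Suppose FindInput outputs $I=I_n|I_p$ and $I'=I_n|I_p'$ with accepting paths $C=C_1C_2$ and $C'=C_1'C_2'$ through $s$ (segments before and after $s$). Form the hybrid graph path $C_1C_2'$ from the start state to the accept state. Because $B$ is syntactically read-once, the sets of nodes queried in $C_1$ and in $C_2'$ are disjoint, so there is an input $J$ consistent with both segments, and $C_1C_2'$ is an accepting computation path for $J$. If $I$ and $I'$ differ at a pebbled node $2i$, then by Definition~\ref{def:rontbp-pbl-def} the node $2i$ and its parent $i$ lie on opposite sides of $s$ (child before, parent after for a black pebble; parent before, child after for a white pebble), so along $C_1C_2'$ the query of $f_i(\cdot,\cdot)$ uses the value of node $2i$ inherited from the wrong one of $I,I'$ --- a non-thrifty query on an accepting path of the NTBP $B$, a contradiction. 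This single observation disposes of black- and white-pebbled nodes at once; your attempt to force the white-pebbled value ``locally'' from the prefix of $C'$ is not needed and, as you yourself note, does not obviously go through.
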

\begin{proof}
Let $I_n$ denote the values of nodes that are not pebbled at state $s$ given as input to FindInput.  
Suppose for contradiction that FindInput outputs $I = I_n | I_p$ and $I^{\prime} = I_n | I_{p}^{\prime}$.  
Note that FindInput must output values of all non-pebbled nodes by Proposition~\ref{prop:thrifty-all-nodes-queried}.
Let $C$ and let $C^{\prime}$ be the accepting computation paths for $I$ and $I^{\prime}$ found by FindInput that 
passes through $s$.  Let $C_1$, $C_{1}^{\prime}$ and $C_2$, $C_{2}^{\prime}$ be the segments of $C$ and $C^{\prime}$ 
before and after $s$ respectively.

Suppose that $I$ and $I^{\prime}$ differ in the value of a black (resp. white) pebbled node $2i$ and $x$ and $x^{\prime}$ are 
the values of node $2i$ in $I$ and $I^{\prime}$ respectievely.  
Then the computation paths $c$ and $C^{\prime}$ are as shown in Figure~\ref{fig:black-diff} 
\begin{figure}[ht]
\includegraphics[scale=0.8]{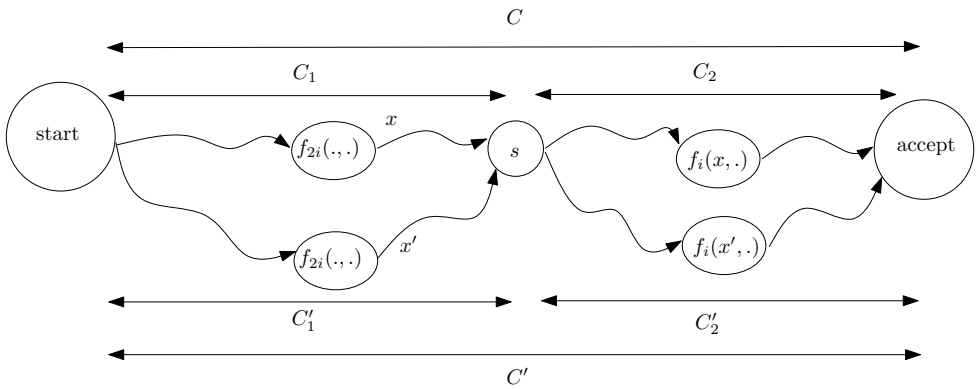}
\caption{Computation paths when a black pebbled node differs}
\label{fig:black-diff}
\end{figure}

\begin{figure}[ht]
\includegraphics[scale=0.8]{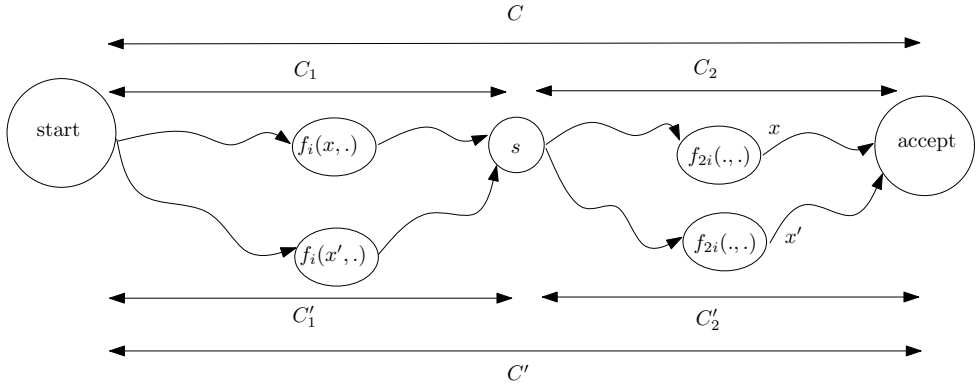}
\caption{Computation paths when a white pebbled node differs}
\label{fig:white-diff}
\end{figure}
(resp. Figure~\ref{fig:white-diff}) by Proposition~\ref{prop:thrifty-all-nodes-queried}.  
Now since $B$ is a RONTBP, the nodes queried in $C_1$ and $C_{2}^{\prime}$ are disjoint
and therefore we can construct an input $J$ with the accepting computation path $C_1 C_{2}^{\prime}$.  
But this path makes a non-thrifty query.
\end{proof}

\begin{theorem}
\label{thm:rontbp-lower-bound}
Any RONTBP solving \bt{h}{2}{k} must have at least $k^{\lceil
h/2 \rceil}$ states.  \end{theorem}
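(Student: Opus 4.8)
The plan is to instantiate the entropy method of Section~\ref{subsec:entropy} with input set $A = E$, so that $|A| = k^{N}$ where $N = 2^{h}-2$. For each $I \in E$ fix a canonical accepting computation path $C(I)$; by Proposition~\ref{prop:thrifty-all-nodes-queried} it queries every node exactly once. First I would verify that, read along $C(I)$, the configurations of Definition~\ref{def:rontbp-pbl-def} (together with the stated convention that the root carries a black pebble in the single configuration right after the root is queried) form a \emph{legal} whole black-white pebbling of \T{h}{2} in the sense of Definition~\ref{def:pebbling}: advancing one step along $C(I)$ flips the ``already queried'' status of exactly one node and therefore changes only the pebble on that node and, if it is internal, the pebbles on its two children, and each such change decomposes into legal moves of Definition~\ref{def:pebbling} (the one case needing care is discussed below). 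Granting legality, and using the known fact that every whole black-white pebbling of \T{h}{2} has a configuration carrying at least $\lceil h/2 \rceil + 1$ pebbles, $C(I)$ must contain a state whose configuration has at least $\lceil h/2 \rceil + 1$ pebbled nodes; discarding the root (which is pebbled in at most one configuration) this state has $p \geq \lceil h/2 \rceil$ pebbled \emph{non-root} nodes. Define $g(I)$ to be the first such state on $C(I)$. By Lemma~\ref{lem:pbl-indep} the configuration at a state, and hence the number $p_{s}$ of its pebbled non-root nodes, depends only on the state $s$, so $g \colon E \to States(B)$ is well defined; moreover each $g(I)$ carries at least one pebble, hence is neither the start state nor the final accept state, so $g(I)\in States(B)$ and Lemma~\ref{lem:state-input} applies to it.

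Next I would bound $|F_{s}|$ where $s$ ranges over the image of $g$ and $F_{s} = g^{-1}(s)$. For every $I' \in F_{s}$ the path $C(I')$ passes through $s$, so by Lemma~\ref{lem:state-input} (Algorithm~\ref{alg:FindInput}) the input $I'$ is determined uniquely by $s$ together with the $N - p_{s}$ values that $I'$ assigns to the non-pebbled non-root nodes of the configuration at $s$ (a set of nodes that, by Lemma~\ref{lem:pbl-indep}, depends only on $s$). In the language of Section~\ref{subsec:entropy}, the decision tree that reads exactly these $N - p_{s}$ node values is a splitting tree for $F_{s}$, so $h(F_{s}) = \log|F_{s}| \leq (N - p_{s})\log k \leq (N - \lceil h/2 \rceil)\log k$; equivalently $|F_{s}| \leq k^{\,N - \lceil h/2 \rceil}$.

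Finally, the sets $\{F_{s}\}$ over $s$ in the image of $g$ partition $E$, so by the pigeonhole principle $|States(B)| \geq |E| / \max_{s}|F_{s}| \geq k^{N} / k^{\,N - \lceil h/2 \rceil} = k^{\lceil h/2 \rceil}$, and adding back the final states only increases the count. The one step demanding genuine work is the legality claim in the first paragraph; the delicate case there is a state $s$ querying an internal node $i$ whose parent is queried later along $C(I)$, for then the induced move must place a black pebble on $i$ and hence, by rule~4, simultaneously slide a black pebble off a child of $i$ — and this is exactly where the read-once property of $B$ must be used, to force some child of $i$ to have already been queried before $s$ (so that it carries a black pebble at $s$). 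Everything else is bookkeeping with Definitions~\ref{def:pebbling} and \ref{def:rontbp-pbl-def} and the two lemmas.
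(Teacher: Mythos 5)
Your proposal is correct and follows essentially the same route as the paper's proof: the entropy method on the input set $E$, mapping each $I\in E$ to a state of $C(I)$ whose associated whole black-white pebbling configuration carries at least $\lceil h/2\rceil$ pebbles on non-root nodes (guaranteed by the $\lceil h/2\rceil+1$ whole black-white pebbling lower bound), and invoking Lemma~\ref{lem:state-input} to bound the number of inputs per state by $k^{N-\lceil h/2\rceil}$. You merely make explicit the legality and well-definedness bookkeeping (via Lemma~\ref{lem:pbl-indep}) that the paper asserts without detail.
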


\begin{proof} 
We give a proof using the entropy method.  Let $B$ be a RONTBP solving \bt{h}{2}{k}.  
Our input set is the set $E$ given in 
Definition~\ref{def:hard-inputs}.  Now for each input $I \in E$, we choose an 
arbitrary accepting computation path $C(I)$ and map $I$ to the state $s$ in $C(I)$ such 
that the whole black-white pebbling configuration associated with $s$ has at least 
$\lceil h/2 \rceil$ pebbles on non-root nodes.  Such a state exists by the whole black-white 
pebbling lower bounds given by \cite{wehr-tep} \cite{frac-pbl-lb}.  Now we can conclude by 
Lemma~\ref{lem:state-input} that there are at most $k^{N - \lceil h/2 \rceil}$ inputs in $E$ reaching 
$s$ on an accepting computation path in the RONTBP $B$.  Therefore, there are at least 
$k^{\lceil h/2 \rceil}$ states in $B$.
\end{proof}

From Theorem~\ref{thm:rontbp-upper-bound} and Theorem~\ref{thm:rontbp-lower-bound}, we obtain the tight bound $\stheta{k^{\lceil h/2 \rceil}}$ for 
RONTBPs solving \bt{h}{2}{k}.

\section{Tight Bounds for BINTBP}
\label{sec:bintbp}

We prove upper bounds for BINTBP by showing that BINTBPs can implement 
fractional black-white pebbling of \T{h}{2} to solve \bt{h}{2}{k}

\begin{theorem} \bt{h}{2}{k} can be solved by a BINTBP using $\sbigoh{k^{h/2}}$
states.  \label{thm:bintbp-upper-bound} \end{theorem}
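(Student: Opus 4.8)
The plan is to mimic the construction used by Cook \emph{et al.}~\cite{wehr-tep} for nondeterministic thrifty branching programs, but to carry it out carefully enough that the resulting BP is both thrifty and bitwise independent. I would start from the known fact (stated in the excerpt, from~\cite{frac-pbl-lb}) that $\T{h}{2}$ admits a fractional black-white pebbling using $h/2+1$ pebbles. Fix such a pebbling sequence $C_1,\ldots,C_t$. Without loss of generality all the pebble values appearing in the sequence are dyadic rationals of the form $m/\ell$ with $\ell=\log k$ (if not, round them; the number of distinct fractions in a fixed pebbling is finite, so taking $k$ a large enough power of $2$ makes every pebble weight an integer multiple of $1/\ell$). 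A black (resp.\ white) pebble value of $b(i)=m/\ell$ on node $i$ then corresponds to the branching program having computed (resp.\ having to verify) the first $m$ bits, under a fixed encoding $\phi:[k]\to\{0,1\}^\ell$, of the value $v_i$.

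The branching program $B$ is built in $t$ layers, exactly as in Theorem~\ref{thm:rontbp-upper-bound}. A state in layer $i$ is tagged by a partial assignment specifying, for each node $j$ with $b(i,j)>0$ or $w(i,j)>0$, the value of the first $\lceil(b(i,j)+w(i,j))\ell\rceil$ bits of $v_j$ — with the understanding that the black bits have been computed and the white bits guessed. Since the number of pebbled bits in any configuration is at most $(h/2+1)\ell$, each layer has at most $k^{h/2+1}$ states, and $t=O(2^h\cdot\ell)$ layers, giving the $\sbigoh{k^{h/2}}$ bound (the extra $k$ from the ``$+1$'' and the polynomial-in-$2^h,\log k$ factors are absorbed by the soft-$O$). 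Each of the four legal pebble moves is translated into edges between consecutive layers. The two delicate moves are rule~3 (remove a white pebble from $i$, legal only when both children of $i$ have pebble value $1$): here the state is labelled by the thrifty query $f_i(v_{2i},v_{2i+1})$, whose arguments are fully specified in the tag because the children are fully pebbled, and the single outgoing edge labelled by the correct bit verifies one guessed bit; and rule~4 (slide part of a black pebble up): again both children are full, so $f_i$ is queried thriftily and one newly computed bit of $v_i$ is recorded while one bit of a child's black pebble is dropped. The white-placement and black-decrease moves become unlabelled guess/forget states, which are then eliminated by the same short-cutting procedure as in Theorem~\ref{thm:rontbp-upper-bound}.

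The main obstacle — and the step that needs the most care — is verifying the bitwise independence property of Definition~\ref{def:bintbp}, i.e.\ that for every state $s$ the sets $F_s$ and $A_s$ are exact Cartesian products over nodes and bit positions. The layered construction makes this plausible: the tag of $s$ fixes precisely a prefix of bits of certain $v_j$'s and leaves every other bit completely free, so $\proj{F_s}{j}$ should be exactly the set of $v\in[k]$ agreeing with the tag on those prefix bits, a set of size a power of $2$, and $F_s$ should be the product of these. I would prove this by induction on the layer index, checking that each of the four edge types preserves the ``prefix-fixed, rest-free'' invariant for $F_s$, and separately for $A_s$ using that the only constraint imposed by reaching the accept state is that the verified (white) bits matched — which, on the hard-input family spirit, again only pins down prefix bits. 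Thrifty-ness is immediate because every function query introduced is $f_i$ evaluated at the tagged (hence correct) child values. Finally I would confirm that the sequence of pebbling configurations read along any accepting computation path is exactly the original pebbling $C_1,\ldots,C_t$, so $B$ indeed solves $\bt{h}{2}{k}$: the root receives black pebble value $1$ at the critical time, meaning $B$ queries $f_1$ at the correct children values and branches on the bit $1$ into the accept state.
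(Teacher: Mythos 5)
Your proposal takes essentially the same route as the paper: the paper simply invokes the Cook et al.\ NTBP built from the $(h/2+1)$-pebble fractional black-white pebbling of \T{h}{2} and observes that at every state the bits corresponding to the black (and, for $A_s$, white) pebble fractions are fixed while all remaining bits range freely, which is exactly your ``prefix-fixed, rest-free'' invariant. Your version just spells out the layered construction and the induction in more detail than the paper's short argument, so it is correct and matches the intended proof.
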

\begin{proof}
Cook et. al. \cite{wehr-tep} describes an NTBP that solves \bt{h}{2}{k} 
using $\sbigoh{k^{h/2}}$ states.  The idea is the same as the idea used in the proof 
of Theorem~\ref{thm:rontbp-upper-bound}.  
We claim that this NTBP is a BINTBP.  
To see this, consider any state $s$ in that BP and node $i$ in the tree 
that has a black pebble value of $b_i$ and a white pebble value of $w_i$.  Now
if we consider the set of inputs $F_s$ reaching $s$.  Then the fraction of bits
corresponding to $b_i$ can take only one particular value in $F_s$.  The rest
of the bits can take all possible combinations.  If we consider $A_s$, then the
fraction of bits corresponding to $b_i$ and $w_i$ are fixed (The value of
fraction of bits corresponding to $w_i$ will be the guessed value) and the rest
of the bits can take all possible combinations.
\end{proof}

Similar to Definition~\ref{def:rontbp-pbl-def}, we now define the fractional black-white pebbling configuration 
at a state in the BINTBP.

\begin{definition}[Fractional Black-White Pebbling Configuration at a State]
Let $B$ be a BINTBP solving \bt{h}{2}{k} and let $s$ be a state such that for some input $I \in E$ 
$s$ has atleast one accepting computation path for $I$ passing through $s$.  
Then for any non-root node $i$, we define the black and 
white pebble values for the configuration at state $s$ as follows.
\begin{align*} \label{eq:actual-pebble} b(i, s) &= 1 - \log_{k} |\proj{F_s}{i}|
\notag \\w(i, s) &= \log_{k} \frac{|\proj{F_s}{i}|}{|\proj{A_s}{i}|}
\end{align*}
\label{def:bintbp-pbl-def}
\end{definition}

Notice that in a minimal BINTBP, any state $s$ must have atleast one accepting 
computation path passing through it.  Otherwise, the state $s$ can be removed.  
Note that the pebbling configuration only depends on state $s$ by definition. 

We now claim that Definition~\ref{def:bintbp-pbl-def} of pebble values satisfy the restrictions
imposed on pebble values by~\eqref{eq:pebble-restr}.

\begin{claim} \label{claim:pbl-rng} For any non-root node $i$ and state $s$, $0
\leq b(i, s), w(i, s) \leq 1$.  \end{claim}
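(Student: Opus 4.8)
The plan is to verify the two inequalities $0 \le b(i,s) \le 1$ and $0 \le w(i,s) \le 1$ separately, using only the definitions of $F_s$, $A_s$, and the projection operator $\pi(\cdot, \cdot)$, together with two elementary facts: that projections of these sets are nonempty (at least one input, and at least one accepting computation path, passes through a non-removable state $s$), and that $A_s \subseteq F_s$, hence $\proj{A_s}{i} \subseteq \proj{F_s}{i}$, so $1 \le |\proj{A_s}{i}| \le |\proj{F_s}{i}| \le k$.

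First I would handle $b(i,s) = 1 - \log_k |\proj{F_s}{i}|$. Since $\proj{F_s}{i}$ is a nonempty subset of $[k]$, we have $1 \le |\proj{F_s}{i}| \le k$, so $0 \le \log_k |\proj{F_s}{i}| \le 1$, which gives $0 \le b(i,s) \le 1$ immediately. Next I would handle $w(i,s) = \log_k \frac{|\proj{F_s}{i}|}{|\proj{A_s}{i}|}$. Here the key observation is that every accepting computation path is in particular a computation path, so $A_s \subseteq F_s$ and consequently $\proj{A_s}{i} \subseteq \proj{F_s}{i}$; also $\proj{A_s}{i}$ is nonempty because $s$ lies on some accepting computation path for some $I \in E$. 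Thus $1 \le |\proj{A_s}{i}| \le |\proj{F_s}{i}| \le k$, so the ratio lies in $[1, k]$, giving $0 \le w(i,s) \le 1$.

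I would also note (though it is not strictly demanded by the claim statement, it is the natural companion fact and may be worth a sentence) that $b(i,s) + w(i,s) = 1 - \log_k |\proj{A_s}{i}| \le 1$, since $|\proj{A_s}{i}| \ge 1$; this shows the configuration respects the full pebbling restriction~\eqref{eq:pebble-restr}, not just the box constraints. The main — and really only — subtlety is justifying nonemptiness of the projections: this rests on the remark immediately preceding the claim that in a minimal BINTBP every state has an accepting computation path through it (otherwise it is deleted), which forces $|F_s|, |A_s| \ge 1$ and hence, since projection of a nonempty set is nonempty, $|\proj{F_s}{i}|, |\proj{A_s}{i}| \ge 1$. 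Everything else is a one-line monotonicity-of-$\log$ argument, so I expect no real obstacle.
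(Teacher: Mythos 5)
Your argument is correct: the paper states this claim without proof (treating it as immediate from Definition~\ref{def:bintbp-pbl-def} together with the earlier observation that in a minimal BINTBP $|F_s|,|A_s|\ge 1$), and your derivation via $\emptyset \neq \proj{A_s}{i} \subseteq \proj{F_s}{i} \subseteq [k]$ and monotonicity of $\log_k$ is exactly the intended one. Your closing remark that $b(i,s)+w(i,s)=1-\log_k|\proj{A_s}{i}|\le 1$ is likewise the (omitted) proof of Claim~\ref{claim:pbl-sum}.
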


\begin{claim} \label{claim:pbl-sum} For any non-root node $i$ and state $s$,
$b(i, s) + w(i, s) \leq 1$.  \end{claim}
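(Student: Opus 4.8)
The plan is to simply substitute the two definitions from Definition~\ref{def:bintbp-pbl-def} and observe a cancellation. Writing out the sum,
\[
b(i,s) + w(i,s) = \left(1 - \log_k |\proj{F_s}{i}|\right) + \log_k \frac{|\proj{F_s}{i}|}{|\proj{A_s}{i}|} = 1 - \log_k |\proj{A_s}{i}|,
\]
so the whole claim reduces to showing $\log_k |\proj{A_s}{i}| \ge 0$, i.e. $|\proj{A_s}{i}| \ge 1$.

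To see that $|\proj{A_s}{i}| \ge 1$, I would invoke the minimality remark made just before the claim: in a minimal BINTBP every state $s$ lies on at least one accepting computation path for some input $I \in E$ (otherwise $s$ can be deleted without changing the function computed, contradicting minimality). Hence $A_s \neq \emptyset$, and therefore its $i^{\text{th}}$ projection $\proj{A_s}{i}$ is a nonempty subset of $[k]$, giving $|\proj{A_s}{i}| \ge 1$. Since $k \ge 2$, we get $\log_k |\proj{A_s}{i}| \ge 0$ and thus $b(i,s) + w(i,s) = 1 - \log_k |\proj{A_s}{i}| \le 1$, as required. (One should also note that $A_s \subseteq F_s$, so $\proj{A_s}{i} \subseteq \proj{F_s}{i}$ and the quantity $w(i,s)$ is well-defined and nonnegative — this is needed anyway for Claim~\ref{claim:pbl-rng}.)

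There is essentially no hard step here: the proof is a one-line algebraic identity followed by the nonemptiness of $A_s$. The only point requiring any care is making explicit the reliance on minimality of $B$ to guarantee $A_s \neq \emptyset$; everything else is immediate from the definitions of $F_s$ and $A_s$.
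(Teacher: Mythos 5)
Your proof is correct and is exactly the computation the paper has in mind (the paper states this claim without proof, treating it as immediate from Definition~\ref{def:bintbp-pbl-def}): the sum telescopes to $1 - \log_k |\proj{A_s}{i}|$, and $|\proj{A_s}{i}| \geq 1$ follows from the minimality convention stated just before the claim. Nothing is missing.
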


The following claim establishes the fact that if the total pebble value of the
tree (in non-root nodes) is high at a state, then there are only a few inputs
on an accepting computation path reaching that state.  In other words, if the
pebble value at a point of the computation is high, then the entropy at that
point is low.

\begin{claim} \label{claim:pbl-inputs} If the total pebble value of the
non-root nodes of the tree at a state $s$ is $p$, then the number of inputs $I \in E$ 
reaching $s$ on an accepting computation path is $k^{N - p}$.
\end{claim}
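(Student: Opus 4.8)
The plan is to count the inputs in $E$ reaching a state $s$ on an accepting computation path by bounding, independently for each non-root node $i$, how many values that node can take among such inputs. Recall that the set of value-tuples for non-root nodes that reach $s$ on an accepting path is exactly $A_s$, so the count we want is $|A_s|$. By the bitwise independence hypothesis (Definition~\ref{def:bintbp}), $A_s$ is a Cartesian product over the non-root nodes: $A_s = \cart{i=2}{N+1} \proj{A_s}{i}$ (the inner bit-structure just says each factor $\proj{A_s}{i}$ is a ``subcube'' of $[k]$, i.e.\ a set whose size is a power of two when $k$ is). Hence $|A_s| = \prod_{i=2}^{N+1} |\proj{A_s}{i}|$, and it suffices to show $\log_k |\proj{A_s}{i}| = 1 - b(i,s) - w(i,s)$ for each $i$, because then $\log_k |A_s| = \sum_i \bigl(1 - b(i,s) - w(i,s)\bigr) = N - p$, giving $|A_s| = k^{N-p}$ as claimed.

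The identity $\log_k|\proj{A_s}{i}| = 1 - b(i,s) - w(i,s)$ is just algebra from Definition~\ref{def:bintbp-pbl-def}: by definition $b(i,s) = 1 - \log_k|\proj{F_s}{i}|$ and $w(i,s) = \log_k|\proj{F_s}{i}| - \log_k|\proj{A_s}{i}|$, so $b(i,s) + w(i,s) = 1 - \log_k|\proj{A_s}{i}|$, hence $1 - b(i,s) - w(i,s) = \log_k|\proj{A_s}{i}|$. So the only real content is the two facts: (1) the set of inputs in $E$ reaching $s$ on an accepting path is determined exactly by its tuple of non-root node values, i.e.\ the map $I \mapsto (\V{2}{I}, \ldots, \V{N+1}{I})$ is injective on this set of inputs and its image is $A_s$; and (2) $|A_s| = \prod_i |\proj{A_s}{i}|$. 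Fact (2) is immediate from bitwise independence. For fact (1), injectivity needs an argument: two inputs of $E$ agreeing on all non-root node values could still differ in the function tables $f_i$, but by Proposition~\ref{prop:thrifty-all-nodes-queried} every node is queried on the accepting path, and by thriftiness every internal node $i$ is queried at $(\V{2i}{I}, \V{2i+1}{I})$, so the accepting path reads off $f_i$ exactly at the one argument where inputs in $E$ are allowed to differ; combined with the requirement that the accepting path ends at the accepting state, this forces $f_i^I(\V{2i}{I}, \V{2i+1}{I})$ to equal $\V{i}{I}$, which is itself one of the node values. Thus the node-value tuple determines the entire input within $E$, and the count of inputs reaching $s$ equals $|A_s|$.

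The main obstacle, and the step deserving the most care, is fact (1): pinning down precisely that an input of $E$ is recoverable from its vector of node values together with the knowledge that it has an accepting computation path through $s$. One must check that the ``non-thrifty'' entries of the $f_i$ (which are forced to $0$ by the definition of $E$) and the ``thrifty'' entries $f_i(\V{2i}{I}, \V{2i+1}{I})$ are all determined — the former by the definition of $E$, the latter because the value propagated up the tree along the accepting path must be consistent with reaching the accepting state, and $f_1 \equiv 1$ guarantees acceptance is possible exactly when the lower values are propagated correctly. Modulo this bookkeeping, the claim reduces to the one-line computation above plus the product formula from bitwise independence.
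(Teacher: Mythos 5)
Your proposal is correct and follows essentially the same route as the paper: the identity $|\proj{A_s}{i}| = k^{1 - b(i,s) - w(i,s)}$ from Definition~\ref{def:bintbp-pbl-def}, followed by multiplying over the non-root nodes. The paper compresses your facts (1) and (2) into the phrase ``by simple counting''; your explicit justification of the product structure via bitwise independence and of the bijection between inputs in $E$ and their node-value tuples is exactly the bookkeeping the paper leaves implicit.
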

\begin{proof} Consider a particular non-root node $i$. Assume that
the total pebble value at $i$ is $p_i$.  From this we have $1 - \log_{k}
|\proj{F_s}{i}| + \log_{k} \frac{|\proj{F_s}{i}|}{|\proj{A_s}{i}|} = p_i$.
Therefore $|\proj{A_s}{i}| = k^{1 - p_i}$. Now by simple counting the total
number of inputs on an accepting computation path is $k^{\sum_{i = 2}^{N+1} (1
- p_i)} = k^{N - p}$.  \end{proof}

We now identify the fractional black-white pebbling of \T{h}{2} on an accepting 
computation path $C$ for an input $I \in E$.  First, we identify certain 
\emph{critical} states in the path $C$.  The pebbling will satisfy the criteria 
that the pebbling configuration changes (i.e., pebbling moves happen) only at 
critical states.  Then we will show that these pebbling configurations always 
underestimate the pebble values of nodes given by Definition~\ref{def:bintbp-pbl-def}.

\begin{definition}[Critical States for Nodes]
The critical state for the root node is the last state querying the root node.  
Every non-root node $j$ may have multiple critical states. 
 Let $s$ denote \emph{a} critical state of parent of
$j$.  If $b(j, s) > 0$, then the last node querying node $j$ before $s$ is a
critical state for $j$. If $w(j, s) > 0$, then the first node querying node 
$j$ after $s$ is a critical state for $j$.
\label{def:bintbp-critical-states}
\end{definition}

We will follow the convention that the start state and accepting state are critical.

For the lower bound proof we will work with the following fractional black-white pebbling along an 
arbitrary accepting computation path for an input in $E$.  
This pebbling satisfies the condition that pebble values are always underestimated. 

\paragraph*{Fractional Black-White Pebbling along Critical States}
We now define the pebbling along critical states 
on an accepting computation path of input $I$. The black pebble value of the root node becomes $1$ immediately 
after its critical state and it is immediately unpebbled.  
Now we define the pebble values of an arbitrary non-root node
$j$. Let $s^{\prime}$ be a critical state for $j^{\prime}$, the parent of $j$.
If $b = b(j, s^{\prime}) > 0$ (We say that $s^{\prime}$ needs this black pebble at $j$), 
then this black pebble value must have increased
from 0 to $b$ at some point of computation. Now consider the critical state $s$
for $j$ before $s^{\prime}$ as per Definition~\ref{def:bintbp-critical-states}.
The black pebble value of node
$j$ is increased from 0 to $b$ at the critical state immediately following $s$.  This state $s$ must 
exist as otherwise we have $b = 0$.  
Similarly, if $w = w(j, s^{\prime}) > 0$ (We say that $s^{\prime}$ needs this white pebble at $j$), then this white pebble value must
decrease from $w$ to 0 at some point of computation. Now consider the critical
state $s$ for $j$ after $s^{\prime}$ as per Definition~\ref{def:bintbp-critical-states}. The white pebble value
is reduced from $w$ to 0 at the critical state immediately following $s$.  This state $s$ must exist as 
otherwise we can construct an input using bitwise independence that differs from $I$ only in the value of node $j$ 
that has an accepting computation path 
with a non-thrifty query.  We decrease the black pebble values of all nodes 
if they are not needed further along the computation path and increase the white pebble values only at the critical state that 
needs them.  

The following claims about the validity of the starting and ending pebbling
configurations are easily proved.

\begin{claim} \label{claim:pbl-start} The start state has an empty pebbling
configuration.  \end{claim}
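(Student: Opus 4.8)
The plan is to read the claim as the statement that the fractional black‑white pebbling configuration at the start state $\startstate$ — equivalently, the configuration with which the pebbling along critical states begins — assigns pebble value $0$ to every node. I would establish this by computing the sets $F_{\startstate}$ and $A_{\startstate}$ directly and then invoking Definition~\ref{def:bintbp-pbl-def}.

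First I would observe that every computation path (on every input) begins at $\startstate$, so for each input $I$ its node‑value tuple $(\V{2}{I},\ldots,\V{N+1}{I})$ lies in $F_{\startstate}$; since any tuple of values for the non‑root nodes is realised by some input (for instance any member of the hard family $E$ of Definition~\ref{def:hard-inputs}, where $f_1\equiv 1$), we get $\proj{F_{\startstate}}{i}=[k]$ for every non‑root node $i$, hence $|\proj{F_{\startstate}}{i}|=k$ and $b(i,\startstate)=1-\log_k|\proj{F_{\startstate}}{i}|=0$. The same inputs are ``yes'' instances, so they have accepting computation paths through $\startstate$, giving $\proj{A_{\startstate}}{i}=[k]$ as well and therefore $w(i,\startstate)=\log_k\!\bigl(|\proj{F_{\startstate}}{i}|/|\proj{A_{\startstate}}{i}|\bigr)=0$. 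Thus every non‑root node has pebble value $(0,0)$ at $\startstate$, and the root is, by the stated convention, unpebbled until immediately after its critical state (the last state querying the root), which occurs strictly after $\startstate$; so the configuration at the start state is empty.

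For the pebbling along critical states I would additionally note the matching direct argument, so that the claim does not lean on the (later‑proved) underestimation property: by the construction, a black pebble on a node $j$ is raised only at the critical state immediately following an \emph{earlier} critical state of $j$, and nothing precedes $\startstate$; a white pebble on $j$ is raised only at a critical state $s'$ of its parent with $w(j,s')>0$, and the only critical state at or before $\startstate$ is $\startstate$ itself, for which $w(j,\startstate)=0$ by the computation above. Hence no pebbling move has been applied at or before $\startstate$ and the starting configuration is the all‑zero configuration demanded by Definition~\ref{def:pebbling}.

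I do not expect a genuine obstacle: this is bookkeeping. The only step needing a line of care is confirming $\proj{A_{\startstate}}{i}=[k]$, i.e. that every assignment of values to the non‑root nodes extends to a ``yes'' instance, which is immediate from $E$; a secondary minor check is that the critical state of the root, and any critical state triggering a pebble increment, occurs strictly after $\startstate$, which follows directly from Definition~\ref{def:bintbp-critical-states}.
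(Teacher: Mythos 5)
Your proof is correct and is exactly the intended justification: the paper states this claim without proof (``easily proved''), and the natural argument is the one you give, namely that every input (in particular every member of $E$) has a (respectively, accepting) computation path through the start state, so $\proj{F_{\startstate}}{i}=\proj{A_{\startstate}}{i}=[k]$ for every non-root node $i$, forcing $b(i,\startstate)=w(i,\startstate)=0$ under Definition~\ref{def:bintbp-pbl-def}. Your additional observation that no pebble-raising move of the critical-state pebbling can occur at or before the start state is a worthwhile extra check that keeps the claim independent of Lemma~\ref{lemma:pbl-underest}.
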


\begin{claim} \label{claim:pbl-acc} The accepting state has an empty pebbling
configuration.  \end{claim}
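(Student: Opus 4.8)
The plan is as follows. Write $s$ for the accepting state. First I would show that $|\proj{F_s}{i}| = |\proj{A_s}{i}| = k$ for every non-root node $i$, which forces the pebble values of Definition~\ref{def:bintbp-pbl-def} to vanish at $s$; then I would conclude that the fractional black-white pebbling along critical states is likewise empty at $s$, since it never exceeds those values and pebble values are nonnegative.

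To pin down $F_s$ and $A_s$: the accepting state has no outgoing edges, so every accepting computation path ends at $s$. Every input $I \in E$ is a ``yes'' instance (by Definition~\ref{def:hard-inputs} the root function is the constant $1$), hence has an accepting computation path, and that path passes through $s$. Moreover, by the definition of $E$ the value $\V{i}{I}$ of each non-root node $i$ is unconstrained over $[k]$: leaf values are arbitrary, and an internal non-root node takes the value $f_i^I(\V{2i}{I},\V{2i+1}{I})$, which may be any element of $[k]$ (this is exactly why $|E| = k^N$). Hence, as $I$ ranges over $E$, the $i$-th coordinate of the accepting tuples sweeps out all of $[k]$, so $\proj{A_s}{i} = [k]$ and $|\proj{A_s}{i}| = k$; since $A_s \subseteq F_s \subseteq [k]^{N}$ we also get $|\proj{F_s}{i}| = k$ for every non-root node $i$.

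Substituting into Definition~\ref{def:bintbp-pbl-def},
\[
b(i, s) = 1 - \log_k |\proj{F_s}{i}| = 1 - \log_k k = 0,
\qquad
w(i, s) = \log_k \frac{|\proj{F_s}{i}|}{|\proj{A_s}{i}|} = \log_k 1 = 0
\]
for every non-root node $i$. The root node is covered by the stated convention: it becomes black-pebbled immediately after its critical state and is unpebbled in the next configuration, and since $s$ follows the last state querying the root, the root bears no pebble at $s$. Since the pebbling along critical states always underestimates the configuration of Definition~\ref{def:bintbp-pbl-def} and pebble values are nonnegative, its configuration at $s$ is identically $0$ on every node; equivalently, one can argue directly that no non-root pebble survives as far as $s$ (nothing past $s$ ``needs'' a black pebble, and $w(j,s)=0$ rules out a white pebble being freshly needed at $s$). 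Either way, the accepting state has an empty pebbling configuration.

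There is no real obstacle here; the claim is genuinely easy. The only point deserving a sentence of justification is the equality $|\proj{F_s}{i}| = |\proj{A_s}{i}| = k$, which rests on two trivialities: $\proj{F_s}{i}\subseteq[k]$, and every value of every non-root node is realized by some input of $E$ whose (guaranteed) accepting path passes through the sink state $s$. Everything after that is the one-line arithmetic displayed above together with the root-node convention.
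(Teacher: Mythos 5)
Your proof is correct, and since the paper only remarks that this claim is ``easily proved'' without supplying an argument, your write-up is exactly the natural filling-in: every tuple in $[k]^N$ is realized by a ``yes'' instance of $E$ whose accepting path ends at the sink, so $\proj{F_s}{i}=\proj{A_s}{i}=[k]$ and both pebble values of Definition~\ref{def:bintbp-pbl-def} vanish, after which nonnegativity plus the underestimation property (or your direct remark that nothing after the accepting state ``needs'' a pebble) kills the configuration along critical states as well. No gaps; the only stylistic note is that invoking Lemma~\ref{lemma:pbl-underest}, which appears later in the text, is logically harmless since that lemma does not depend on this claim.
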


The following lemmas establish the fact that the pebbling sequence along critical states is a valid pebbling sequence.

\begin{lemma} \label{lemma:pbl-children} Let $s$ be a critical state for node
$j$, then both of $j$'s children are fully pebbled at $s$.  \end{lemma}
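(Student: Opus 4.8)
The plan is to prove the contrapositive style statement directly: I want to show that if $s$ is a critical state for a non-root node $j$, then at $s$ each child $c$ of $j$ has pebble value $1$, i.e. $b(c,s)+w(c,s)=1$. By Definition~\ref{def:bintbp-pbl-def}, this total equals $1 - \log_k|\proj{A_s}{c}|$, so equivalently I must show $|\proj{A_s}{c}| = 1$, i.e. the value of node $c$ is completely determined (fixed) at state $s$ among all inputs in $E$ with an accepting computation path through $s$. So the lemma reduces to a statement about $A_s$: \emph{the values of the children of a node queried ``critically'' at $s$ are forced}.

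First I would recall from Definition~\ref{def:bintbp-critical-states} what it means for $s$ to be critical for $j$: either $s$ is the last state querying the root (the case $j = 1$, handled separately / trivially since the root's children question is about $v_2, v_3$ and we just need them fixed at the moment of the last root query), or there is a critical state $s'$ of the parent $j' = \lfloor j/2\rfloor$ with $b(j,s')>0$ and $s$ is the last state querying $j$ before $s'$, or $w(j,s')>0$ and $s$ is the first state querying $j$ after $s'$. In every case, $s$ \emph{is a state that queries node $j$}, i.e. $s$ is labelled $f_j(x_1,x_2)$ for some $x_1,x_2\in[k]$ (or $\ell_j$ if $j$ is a leaf, in which case $j$ has no children and there is nothing to prove). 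Now here is the crux: $B$ is thrifty, so on \emph{every} accepting computation path through $s$ and \emph{every} input $I\in E$ with such a path, the query at $s$ satisfies $x_1 = \V{2j}{I}$ and $x_2 = \V{2j+1}{I}$. But $x_1,x_2$ are fixed labels of the state $s$ — they do not depend on $I$. Hence every input in $A_s$ has $v_{2j} = x_1$ and $v_{2j+1} = x_2$, so $\proj{A_s}{2j} = \{x_1\}$ and $\proj{A_s}{2j+1} = \{x_2\}$, each of size $1$. By Definition~\ref{def:bintbp-pbl-def} this gives $b(2j,s)+w(2j,s) = 1 - \log_k 1 = 1$ and likewise for $2j+1$; that is, both children are fully pebbled at $s$, which is exactly the claim.

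I expect the only delicate point — and the main obstacle to a fully rigorous write-up — is handling the degenerate cases cleanly: (i) when $j$ is itself a leaf, so ``$j$'s children'' is vacuous and the lemma holds trivially; (ii) the root node $j=1$, where by the convention stated just before Claim~\ref{claim:pbl-start} we treat the root as pebbled and immediately unpebbled right after its (last) query state, and the query state there is labelled $f_1(x_1,x_2)$ so the same thriftiness argument fixes $v_2$ and $v_3$; and (iii) making sure that a critical state for $j$ as defined really is a state \emph{querying} $j$ — this is immediate from Definition~\ref{def:bintbp-critical-states}, since each clause there literally picks ``the last/first node querying node $j$''. Once these cases are dispatched, the core argument is the one-line thriftiness observation above, and no computation beyond the definition of $b,w$ is needed. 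I would also remark that the same reasoning shows more generally that any node queried at $s$ has both its children fully pebbled at $s$, which is what makes the subsequent validity-of-pebbling-moves argument go through.
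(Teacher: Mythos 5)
Your proof is correct and follows essentially the same route as the paper: use thriftiness to conclude that the query label $f_j(x_1,x_2)$ at $s$ forces $\proj{A_s}{2j}=\{x_1\}$ and $\proj{A_s}{2j+1}=\{x_2\}$, then compute $b+w = 1-\log_k|\proj{A_s}{\cdot}| = 1$ from Definition~\ref{def:bintbp-pbl-def}. The extra case analysis you flag (leaves, root) is harmless additional care but not needed beyond what the paper's one-line argument already covers.
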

\begin{proof} Let $s$ query $f_j(u, v)$. We have $\proj{A_s}{2j} = \{ u \}$
(and $\proj{A_s}{2j+1} = \{ v \}$) by the thrifty property.  Then $b(2j, s) +
w(2j, s) = 1 - \log_{k} |\proj{F_s}{2j}| + \log_{k}
\frac{|\proj{F_s}{2j}|}{|\proj{A_s}{2j}|} = 1$ (and similarly for $2j + 1$).
\end{proof}

\begin{lemma} \label{lemma:pbl-incdec} If the black pebble value of node $j$ is
increased or the white pebble value of node $j$ is decreased at state $s$, then
both its children are fully pebbled at the critical state immediately before
$s$.  \end{lemma}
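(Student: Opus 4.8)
The plan is to unwind the construction of the pebbling along critical states and then reduce to Lemma~\ref{lemma:pbl-children}.

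First I would identify the state at which each of the two moves in question is performed. By construction, both an increase of $b(j)$ and a decrease of $w(j)$ to $0$ are carried out ``at the critical state immediately following'' some critical state of the node $j$ itself: for an increase of $b(j)$ that critical state (call it $r$) is the critical state for $j$ lying before the critical state $s'$ of the parent that needs the black pebble at $j$; for a decrease of $w(j)$ it is the critical state for $j$ lying after $s'$. So if $s$ denotes the state at which the move is performed, then $s$ is the critical state immediately following $r$, hence the critical state immediately before $s$ is exactly $r$, and $r$ is a state that queries node $j$. If $j$ is a leaf there are no children and there is nothing to prove, so assume $j$ is internal and $r$ queries $f_j(u,v)$.

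Next I would apply Lemma~\ref{lemma:pbl-children} to the critical state $r$ of node $j$: each child $c \in \{2j, 2j+1\}$ satisfies $b(c,r) + w(c,r) = 1$ for the pebble values of Definition~\ref{def:bintbp-pbl-def}. It then remains to show that the pebble values assigned to $c$ by the pebbling along critical states are already equal to $b(c,r)$ and $w(c,r)$ at the state $r$, so that their sum is $1$ there. For the black part, if $b(c,r) > 0$ then $r$ needs a black pebble of value $b(c,r)$ at $c$; by construction this black pebble is placed at the critical state immediately following the critical state for $c$ preceding $r$ (which exists, since otherwise $b(c,r) = 0$), and it is not decreased before $r$ because it is needed at $r$, so $b(c) = b(c,r)$ at $r$. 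For the white part, if $w(c,r) > 0$ then $r$ needs a white pebble of value $w(c,r)$ at $c$, and by construction the white pebble value of $c$ is raised precisely at the critical state that needs it, namely $r$; the critical state for $c$ after $r$ at which this white pebble is later removed exists, since otherwise bitwise independence would produce an input differing from $I$ only at $c$ whose accepting computation path makes a non-thrifty query. Hence $b(c) + w(c) = b(c,r) + w(c,r) = 1$ at $r$ for each child $c$, which is the assertion of the lemma.

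The step I expect to be the main obstacle is the timing bookkeeping in the previous paragraph: one must make ``needed further along the computation path'' precise enough to guarantee that a black pebble placed on a child is not removed before the state $r$ is reached, and in particular is not consumed by an earlier increase of $b(j)$ (which, by the pebbling rules, simultaneously decreases a child's black pebble value); and one must check that the placement and removal times of the pebbles on a given child arising from the (possibly several) critical states of $j$ are mutually consistent, so that the single move on $j$ being analysed really does see both children fully pebbled. The remaining boundary cases, where a needed black or white pebble on a child has no ``source'' critical state, are already ruled out by the arguments built into the definition of the pebbling along critical states, which I would simply cite.
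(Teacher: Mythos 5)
Your proposal is correct and follows the same route as the paper: the paper's entire proof is the observation that the move happens only at the critical state immediately following a critical state for $j$, followed by a citation of Lemma~\ref{lemma:pbl-children}. The extra bookkeeping you supply --- checking that the pebble values assigned to the children by the pebbling along critical states actually coincide at $r$ with the values $b(c,r)$, $w(c,r)$ of Definition~\ref{def:bintbp-pbl-def} --- is a genuine detail the paper leaves implicit, and your handling of it is consistent with the construction.
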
 \begin{proof} For a node $j$, the black pebble value is
increased or the white pebble value is decreased only at the critical state
immediately following a critical state for $j$. By
Lemma~\ref{lemma:pbl-children} both children of node $j$ are fully pebbled at
this critical state.  \end{proof}

The following is our key technical lemma and establishes the fact that the
pebbling values defined for the critical states never overestimate the actual
pebbling values of nodes.

\begin{lemma} \label{lemma:pbl-underest} Let $b$ and $w$ be the pebble values
 at state $s$ for an arbitrary non-root node $2j$ with respect to an arbitrary 
accepting computation path for some input in $E$, then $b \leq b(2j,
s)$ and $w \leq w(2j, s)$.  \end{lemma}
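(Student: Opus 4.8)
The plan is to proceed by induction along the accepting computation path, tracking the pebbling configuration defined along critical states versus the actual pebble values from Definition~\ref{def:bintbp-pbl-def}. The statement concerns a node $2j$, so I would phrase the induction in terms of positions in the computation path $C$ and show that at every state $s$ the ``critical-state pebbling'' value $b$ of $2j$ is at most $b(2j,s)$ and the value $w$ is at most $w(2j,s)$. The base case is the start state, where both critical-state values and both actual values are $0$ by Claim~\ref{claim:pbl-start} and the fact that $\proj{F_{s_0}}{2j} = \proj{A_{s_0}}{2j} = [k]$ at the start. Then I would examine how each quantity can change as we move one step along $C$.

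First I would handle the black pebble value. The critical-state value $b$ is increased from $0$ to $b(j',s')$ only at the critical state immediately following a critical state $s$ for $2j$; at that state $s$, by the definition of $b(2j,s)$, the quantity $\log_k|\proj{F_s}{2j}|$ has dropped from its maximal value, which is exactly what gives a large actual $b(2j,s)$ — and since $|\proj{F_s}{2j}|$ only shrinks (or stays the same) as we move forward along $C$ (states later in the path have reaching-input sets no larger), the actual value $b(2j,s)$ is non-decreasing from that critical state onward until $2j$ is forgotten. So once we set $b$ equal to $b(2j, \text{that critical state})$ we have $b \le b(2j, \cdot)$ at all subsequent states until a decrease. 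When we decrease $b$ to $0$ (because the pebble is no longer needed), the inequality $b \le b(2j,s)$ is trivially preserved since the left side becomes $0$. The subtle point is to argue that $b(2j,s)$ does not drop below $b$ before the decrease happens — this is where one uses that the critical state for $j'$ that ``needs'' the black pebble at $j$ is the reason we kept $b$ at value $b$, and at that state the reaching set $F_{j'\text{-critical}}$ restricted to $2j$ is still small because queries on the path between $s$ and the $j'$-critical state are read-once-ish in the thrifty sense and do not re-expand the projection.

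Next I would handle the white pebble value symmetrically but in reverse time. The critical-state white value $w$ is set to $w(j',s')$ at some point before the critical state $s$ for $2j$ that comes after $s'$, and is decreased back to $0$ at the critical state immediately following that $s$. Here the relevant monotonicity is on $\log_k\frac{|\proj{F_s}{2j}|}{|\proj{A_s}{2j}|}$: going backward from the critical state where the white pebble is removed, $|\proj{A_s}{2j}|$ is small (a single value at the thrifty query by Lemma~\ref{lemma:pbl-children}) and grows as we move earlier, so $w(2j,s)$ is at least the gap between the guessed value and the full space, which is exactly what we set $w$ to. I would use Lemma~\ref{lemma:pbl-children} to pin down $|\proj{A_s}{2j}| = 1$ at critical states for $j$, and the bitwise-independence structure (Definition~\ref{def:bintbp}) to ensure the projections behave as clean powers of $2$ so the $\log_k$ arithmetic is exact.

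The main obstacle I anticipate is the bookkeeping of \emph{which} critical state is responsible for each pebble and showing that between the state where we raise $b$ (or lower $w$) and the state where the parent $j'$ actually ``uses'' it, the actual pebble value $b(2j,\cdot)$ (resp.\ $w(2j,\cdot)$) never dips below the value we committed to. This amounts to a monotonicity-along-the-path argument for $|\proj{F_s}{2j}|$ and $|\proj{A_s}{2j}|$: $F_s$ can only shrink as $s$ advances and $A_s$ can only shrink as $s$ advances (since $A_s$ is defined via accepting paths through $s$, and later states have been committed to more of the computation). Combined with the thrifty property forcing $\proj{A_s}{2j}$ down to a single element at the critical query of $j$, this gives the needed inequalities. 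I would also need to rule out pathological cases using Proposition~\ref{prop:thrifty-all-nodes-queried} (every node is queried on every accepting path for inputs in $E$), which guarantees the critical states referenced in the definition actually exist. Once all of these monotonicity facts are assembled, the inductive step is a short case analysis on whether the current step changes $b$, changes $w$, or leaves the configuration of $2j$ untouched, and in each case the inequality $b \le b(2j,s)$, $w \le w(2j,s)$ is immediate.
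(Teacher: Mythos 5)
Both halves of your argument rest on monotonicity claims about $F_s$ and $A_s$ that are unjustified (and in fact false for branching programs), and in the black-pebble case the claim also points in the wrong direction. You assert that $|\proj{F_s}{2j}|$ "only shrinks as we move forward along $C$" because later states have smaller reaching-input sets; but an input reaching a later state of $C$ need not pass through the earlier states of $C$, so there is no such containment. More importantly, the pebbling along critical states anchors the estimated value $b$ at the \emph{parent's} critical state $s_j$, i.e.\ $b = b(2j,s_j)$ where $s_j$ lies at the \emph{end} of the interval on which the pebble is held — not at the critical state for $2j$ where the pebble is raised, as you write. So even if your monotonicity held, it would give $b(2j,s) \leq b(2j,s_j) = b$ for intermediate $s$, the reverse of the lemma. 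What is actually needed, and what the paper proves, is the opposite inequality on the segment where $2j$ is not queried: by bitwise independence there are $|\proj{F_s}{2j}|$ inputs differing from $I$ only at node $2j$ that reach $s$, and since no state strictly between $s$ and $s_j$ queries $2j$ they all continue along $C$ to $s_j$; hence $|\proj{F_{s_j}}{2j}| \geq |\proj{F_s}{2j}|$ and therefore $b(2j,s) \geq b(2j,s_j) = b$.

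The white-pebble half has the same flaw in a more damaging form. You claim "$A_s$ can only shrink as $s$ advances" and conclude that the ratio $|\proj{F_{s''}}{2j}|/|\proj{A_{s''}}{2j}|$ behaves well; but accepting paths through a later state need not pass through an earlier one, so $\proj{A_{s''}}{2j}$ can grow along the segment, and this is precisely the case the paper's proof is devoted to (if $A$ could only shrink, $f' \geq f$ and $a' \leq a$ would make the claim trivial). The missing key idea is the thriftiness argument showing that any value $r'$ newly admitted into $\proj{A_{s''}}{2j} \setminus \proj{A_{s_j}}{2j}$ cannot already lie in $\proj{F_{s_j}}{2j}$: otherwise bitwise independence yields an input $J$ agreeing with $I$ except at node $2j$ with $\V{2j}{J} = r'$ that reaches $s''$ through $s_j$ and has an accepting path, which forces a non-thrifty query at $s_j$. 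Consequently each factor-of-two growth of $|\proj{A}{2j}|$ forces at least a factor-of-two growth of $|\proj{F}{2j}|$, and the ratio — hence $w(2j,s'') = \log_k\bigl(|\proj{F_{s''}}{2j}|/|\proj{A_{s''}}{2j}|\bigr)$ — never drops below $w$. Your appeal to "the gap between the guessed value and the full space" does not engage with this ratio at all, so the white-pebble bound is not established.
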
 \begin{proof} The proof is divided into
two parts. First, we show that the black pebble values are never overestimated.
Then we show that white pebble values are never overestimated.

We consider an arbitrary state $s$ at which the black pebble value of node $2j$
is defined as $b > 0$.  Note that the black pebble value of a non-root node
$2j$ is non-zero if and only if there exists a critical state for the parent of
$2j$ at which the actual pebble value of $2j$ is $b$. Therefore, there exists a
state $s_{2j}$ that is a critical state for $2j$ before $s$ and $s_{j}$ that is
a critical state for $j$, the parent of $2j$, after $s$ (with $s = s_{j}$
possibly.).  Now suppose that the actual black pebble value for node $2j$ at
state $s$ is $b(2j, s)$ and that $b(2j, s) < b$. 

\begin{eqnarray*} 1 - \log_{k} |\proj{F_s}{2j}| &<& b \\ \implies
|\proj{F_s}{2j}| &>& {k}^{1 - b} \end{eqnarray*}

Now by the independence assumption we may conclude that there are more than
${k}^{1 - b}$ inputs that differ only at the value of node $2j$ reaching $s$. By the
definition of critical states, there does not exist any node querying $2j$ in
$C(I)$ from $s$ to $s_{j}$.  All these inputs can follow the same path to the
critical state $s_{j}$. Therefore, the black pebble value is $b(2j, s_{j}) <
b$, a contradiction.

It remains to prove that white pebble values are never overestimated.  We will
prove that the white pebble value of a node $2j$ is at least the estimated
value $w$ between all states from $s_j$ to $s_{2j}$ (both inclusive). Here
$s_j$ is a critical state for $j$ at which node $2j$ acquired a white pebble
value of $w$ and $s_{2j}$ is the critical state for $2j$ after which this
pebble value is removed. In order to prove this, it is sufficient to prove that
the ratio $\frac{f^\prime}{a^{\prime}} =
\frac{|\proj{F_{s^\prime}}{2j}|}{|\proj{A_{s^\prime}}{2j}|}$ for any state
$s^\prime$ is greater than the corresponding ratio $\frac{f}{a}$ at state
$s_j$, where $s^\prime$ is a state on $C(I)$ in the segment from $s_j$ to
$s_{2j}$.  By the independence argument, we have $f^{\prime} \geq f$ by taking
projections of all $f$ inputs that differ from $I$ only at node $2j$. We will
show that if $a^{\prime} > a$, then $f^{\prime} > f$ by an appropriate amount
so that the ratio is not reduced.

Since the white pebble value is acquired at state $s_j$, we have
$\wpbl{s_j}{2j} = w$. Now consider a state $s^{\prime}$ (Possibly equal to
$s_{2j}$) on the segment of the computation path $C(I)$ between $s_j$ and
$s_{2j}$. Our aim is to prove that $w \leq \wpbl{s^{\prime}}{2j}$. Let $f =
|\proj{F_{s_j}}{2j}|$, $f^{\prime} = |\proj{F_{s^{\prime}}}{2j}|$, $a =
|\proj{A_{s_j}}{2j}|$ and $a^{\prime} = |\proj{A_{s^{\prime}}}{2j}|$. First of
all note that $f^{\prime} \geq f$ since there are no nodes querying $2j$ from
$s_j$ to $s_{2j}$ and the independence property guarantees $f$ inputs that
differ from $I$ only at node $2j$ will reach $s^{\prime}$.  Now we will show
that whenever $a^{\prime} > a$ , $f^{\prime}$ is greater than $f$ by the same
multiplicative factor. Note that both $f$ and $a$ are powers of two. By the
assumption of bitwise independence, we can partition bits of node $2j$ into
``fixed'' bits and ``unfixed'' bits for any $F_s$ (and $A_s$). The only way to
add elements to these sets are by unfixing bits. Let us assume that exactly one
more bit became unfixed in $\proj{A_{s^{\prime}}}{2j}$. So $a^{\prime} = 2a$.  

Let $r^{\prime}$ be a value in
$\proj{A_{s^{\prime}}}{2j}\setminus\proj{A_{s_j}}{2j}$. We claim that
$r^{\prime} \notin \proj{F_{s_j}}{2j}$. We will prove this by contradiction.
Suppose $r^{\prime} \in \proj{F_{s_j}}{2j}$, then by the independence property
there is an input $J$ which is the same as $I$ except that $\V{2j}{J} =
r^{\prime}$ reaches $s^{\prime}$ through $s_j$. Since $r^{\prime} \in
\proj{A_{s^{\prime}}}{2j}$, there is an accepting path for $J$ through $s_j$.
This accepting path is obtained by using the independence property of
$A_{s^{\prime}}$ and the fact that an accepting computation for $I$ passes
through $s^{\prime}$.  But this path makes a non-thrifty query at $s_j$.
Therefore $r^{\prime} \notin \proj{F_{s_j}}{2j}$ as claimed.  Since $r^{\prime}
\in \proj{F_{s^{\prime}}}{2j}$, at least one bit must have become unfixed. But
this implies $f^{\prime} \geq 2f$. This proof can be easily extended to the
case where $a^{\prime} = 2^{m}a$ for any $m$.  \end{proof}

Now we prove tight size lower bounds for BINTBPs solving \bt{h}{2}{k}.

\begin{theorem} If $B$ is a BINTBP solving \bt{h}{2}{k}, then $B$ has at least $\frac{1}{2}k^{h/2}$ states.
\label{thm:bintbp-lower-bound}
\end{theorem}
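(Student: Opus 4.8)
The plan is to run the entropy method of Section~\ref{subsec:entropy} with $A = E$ (Definition~\ref{def:hard-inputs}) and a distribution function $g$ that sends each input to a state on one of its accepting computation paths at which the configuration of Definition~\ref{def:bintbp-pbl-def} is heavy on the non-root nodes. Claim~\ref{claim:pbl-inputs} then bounds $|g^{-1}(s)|$, and a lower bound on the number of states follows at once from $|E| = k^N$.

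Fix an arbitrary $I \in E$ and an arbitrary accepting computation path $C(I)$ in $B$. By Proposition~\ref{prop:thrifty-all-nodes-queried} every node of $\T{h}{2}$ is queried along $C(I)$, so the critical states of $C(I)$ and the fractional black-white pebbling along them are well defined. The first step is to certify that this pebbling is a legal fractional black-white pebbling of $\T{h}{2}$ in the sense of Definition~\ref{def:pebbling}: it begins and ends with the all-zero configuration by Claims~\ref{claim:pbl-start} and~\ref{claim:pbl-acc}; the root attains black pebble value $1$ by construction; and every move is of one of the four legal types, since black values are only ever decreased, white values are only ever raised, and each white removal as well as each black/white slide is performed only at (or immediately after) a state where, by Lemmas~\ref{lemma:pbl-children} and~\ref{lemma:pbl-incdec}, both children of the affected node carry pebble value $1$.

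Next, invoke the known fact \cite{frac-pbl-lb} (see also \cite{wehr-tep}) that every fractional black-white pebbling of $\T{h}{2}$ has a configuration with total pebble value at least $h/2 + 1$. Applying this to the pebbling along critical states of $C(I)$, there is a state $s = s(I)$ on $C(I)$ whose associated configuration has total pebble value at least $h/2 + 1$; since the root contributes at most $1$, the non-root nodes carry at least $h/2$ of that total, and by Lemma~\ref{lemma:pbl-underest} the actual configuration at $s$ (Definition~\ref{def:bintbp-pbl-def}) has non-root total pebble value $p \geq h/2$ as well. Put $g(I) = s(I)$; this is a non-final state because the start and accepting states have empty configurations. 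Finally, for any state $s$ in the range of $g$, writing $p \geq h/2$ for the non-root total of its configuration, Claim~\ref{claim:pbl-inputs} gives that the number of inputs of $E$ reaching $s$ on an accepting computation path is $k^{N-p}$, so $|g^{-1}(s)| \leq k^{N - h/2}$; as the sets $g^{-1}(s)$ cover $E$ and $|E| = k^N$, the range of $g$, and hence $B$, contains at least $k^{N}/k^{N-h/2} = k^{h/2} \geq \frac{1}{2} k^{h/2}$ states.

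The legality check and the concluding counting are routine given the lemmas already in hand; the one delicate point is the localization step, namely pinning the peak configuration of the abstract critical-states pebbling to an honest state of $B$ and correctly accounting for the (at most unit) contribution of the root node, which is where the slack absorbed by the constant $\frac{1}{2}$ comes from.
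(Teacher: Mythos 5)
Your argument is the paper's argument for the case where $k$ is a power of two: distribute $E$ over the critical states via the critical-states pebbling, use the $h/2+1$ fractional pebbling lower bound (losing the root's unit contribution to get $h/2$ on non-root nodes), transfer to the actual configuration via Lemma~\ref{lemma:pbl-underest}, and count with Claim~\ref{claim:pbl-inputs}. That part is assembled correctly and yields $k^{h/2}$ states.

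The genuine gap is the case where $k$ is \emph{not} a power of two, which you omit entirely and which is in fact the only source of the constant $\frac{1}{2}$ in the statement. Your closing remark attributes the $\frac{1}{2}$ to the root-node accounting, but that accounting only moves the exponent from $h/2+1$ to $h/2$; it contributes no constant factor. For general $k$, the machinery you rely on does not directly apply: Definition~\ref{def:bintbp} only defines bitwise independence via the sub-BP $B'$ over inputs with values in $[2^{\ell}]$ (where $2^{\ell}$ is the largest power of two below $k$), and the bit-fixing arguments behind Lemma~\ref{lemma:pbl-underest} and the exact counting in Claim~\ref{claim:pbl-inputs} (projections being powers of two, pebble values measured by $\log_k$) are developed under the power-of-two assumption. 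The paper closes this by applying the power-of-two bound to $B'$, obtaining at least $(2^{\ell})^{h/2}$ states and absorbing the resulting loss into the constant $\frac{1}{2}$. Without this step your proof establishes the theorem only for $k$ a power of two and says nothing for the remaining (generic) values of $k$.
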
 \begin{proof} Assume that $k$ is a power of two. We now apply the
entropy method described in Subsection~\ref{subsec:entropy}.  Our input set is
the set $E$ described previously. We now describe our distribution function
$f$. Recall that each instance $I$ in $E$ is a ``yes'' instance and therefore
guaranteed to have an accepting computation path $C(I)$ in $B$. As we have
already seen, we can identify a sequence of critical states in $C(I)$ and
associate a fractional black-white pebbling configuration with each critical
state such that the sequence of fractional black-white pebbling configurations
form a valid fractional black-white pebbling of $\T{h}{2}$ (See Claims~\ref{claim:pbl-rng},
\ref{claim:pbl-sum}, \ref{claim:pbl-start}, \ref{claim:pbl-acc},
\ref{lemma:pbl-children}, and \ref{lemma:pbl-incdec}).  But we know that any
valid fractional black-white pebbling of $\T{h}{2}$ must have a configuration
with at least $h/2$ pebbles on non-root nodes \cite{frac-pbl-lb}.  Let $s$ be
the critical state in $C(I)$ that corresponds to this configuration. Our
distribution function $f$ maps $I$ to $s$. Now consider an arbitrary state $s$
in $range(f)$ and consider the set $G_s = f^{-1}(s)$. By
Claim~\ref{claim:pbl-inputs}, we have $|G_s| \leq k^{N-h/2}$.  It follows that
$B$ has at least $k^{h/2}$ states.

When $k$ is not a power of two, we consider the highest power of two
($2^{\ell}$) smaller than $k$. Consider the sub-BP of $B$ that solves
\bt{h}{2}{k} when the values are from the set $[2^{\ell}]$.  By Definition~\ref{def:bintbp} 
and the lower bound when $k$ is a power of two, we have
that this sub-BP of $B$ has at least $2^{{\ell}^{h/2}} > \frac{1}{2}k^{h/2}$
states.  \end{proof}

From Theorem~\ref{thm:bintbp-upper-bound} and Theorem~\ref{thm:bintbp-lower-bound}, we conclude that 
BINTBPs solving \bt{h}{2}{k} has size \stheta{k^{h/2}}.

\begin{remark} We note that the lower-bound proof in \cite{wehr-tep} for
deterministic thrifty BPs can be obtained by specializing our argument to
deterministic thrifty BPs. Specifically, we define the black pebble value of a
node as $1$ if and only if its value is known. The critical state for the root
node is the last state querying root and critical state for other nodes $j$ are
those states which query $j$ and immediately precedes the critical state for
$j$'s parent. The fact that the computation follows a valid black pebbling can
be argued using thriftiness (bitwise independence is not required.). We then
map each input to the state that has $h$ or more pebbles. The lower bound
follows.  \end{remark}

\section{Lower Bounds for Deterministic BPs Using Entropy Method}
\label{sec:otherproofs}

In this section, we show that many lower bound proofs in \cite{wehr-tep} can be
derived using the entropy method and derive some new applications of the
method.
%The lower-bound proofs in \cite{wehr-tep} essentially can also be .  

\begin{theorem}{(\cite{wehr-tep})} Any deterministic $k$-way BP solving
\hft{3}{2}{k} has at least $k^3$ states. \label{thm:f3} \end{theorem}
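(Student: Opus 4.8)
The plan is to instantiate the entropy method of Subsection~\ref{subsec:entropy}, mirroring the structure of the RONTBP and BINTBP lower bounds but now with a \emph{black} pebbling. Label $\T{3}{2}$ so that $1$ is the root, $2,3$ its children, and $4,5,6,7$ the leaves (with $4,5$ under $2$ and $6,7$ under $3$), and recall the guiding fact that black pebbling of $\T{3}{2}$ needs three pebbles, with a critical configuration keeping pebbles simultaneously on $\{2,6,7\}$ (or, symmetrically, on $\{3,4,5\}$). We want to show that \emph{every} deterministic $k$-way BP $B$ for $\hft{3}{2}{k}$ must, on almost every input, pass through a state that has effectively committed to three independent node values, and then conclude via the inequality $Size(B)\ge |A|/\max_{s}|g^{-1}(s)|$ underlying the method.

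For the input family $A$ I would use the $\hft{3}{2}{k}$-analogue of the set $E$ of Definition~\ref{def:hard-inputs}: every internal function $f_i$ is $0$ off its single thrifty pair, i.e.\ $f_i^{I}(x,y)=0$ unless $(x,y)=(\V{2i}{I},\V{2i+1}{I})$. An input of $A$ is then named by the four leaf values together with the three internal values $\V{2}{I},\V{3}{I},\V{1}{I}$ (the last being the output), so $|A|=k^{7}$. The point of this choice is that \emph{every non-thrifty query to an $f_i$ returns the uninformative value $0$}: flipping the unique nonzero entry of any $f_i$ leaves all non-thrifty computations unchanged but changes the output, so on the computation path $P_I$ of each $I\in A$ the BP must in fact perform the seven queries $\ell_4,\ell_5,\ell_6,\ell_7$, $f_2(\V{4}{I},\V{5}{I})$, $f_3(\V{6}{I},\V{7}{I})$, $f_1(\V{2}{I},\V{3}{I})$. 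Declaring a non-root node $i$ ``black pebbled'' at a state $s$ of $P_I$ exactly when its thrifty query occurs at or before $s$ while the thrifty query of its parent occurs strictly after $s$ (and pebbling/unpebbling the root around its thrifty query), one argues — as in the remark following Theorem~\ref{thm:bintbp-lower-bound}, but using the ``$0$ off the thrifty pair'' structure in place of thriftiness — that this is a legal black pebbling of $\T{3}{2}$ along $P_I$. Hence $P_I$ contains a state whose configuration carries $\ge 3$ pebbles on non-root nodes; set $g(I)$ to be that state.

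It remains to bound $|g^{-1}(s)|$. A configuration with three pebbles on non-root nodes of $\T{3}{2}$ is one of $\{2,6,7\}$ or $\{3,4,5\}$, so after a partition of $A$ into the two symmetric cases it suffices (by the symmetry of nodes $2$ and $3$) to treat $\{2,6,7\}$, which contains at least half of $A$. For $s$ in the range of $g$ with configuration $\{2,6,7\}$, the label of $s$ is $f_3(\V{6}{}, \V{7}{})$, which already fixes $\V{6}{}$ and $\V{7}{}$ on the whole fiber; the heart of the proof is to show that the fiber also determines, up to one more $[k]$-value, the node-$2$ subtree, so that $|g^{-1}(s)|\le k^{4}$ and hence $Size(B)\ge \tfrac12\,k^{7}/k^{4}=\tfrac12 k^{3}$. (The factor $\tfrac12$ is removed exactly as in the proof of Theorem~\ref{thm:bintbp-lower-bound}, by restricting to $2^{\ell}$-valued instances, or by a marginally larger family.) Concretely one builds a splitting tree for $g^{-1}(s)$: it reads $\V{6}{},\V{7}{}$ off $s$ for free, follows $P_I$ back into the segment before $s$ — where, because node $2$ is pebbled at $s$, the thrifty query $f_2(\V{4}{I},\V{5}{I})$ has already occurred — to recover $\V{4}{},\V{5}{}$ and the value $\V{2}{}$, and finally follows $P_I$ past $s$ to read $\V{3}{}$ (the outgoing label of $s$) and the output $\V{1}{}$; the claim to establish is that the way $P_I$ reaches $s$ leaves only one free degree of freedom among $\{\V{4}{},\V{5}{},\V{2}{}\}$.

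The main obstacle is precisely that $B$ is not assumed thrifty (nor read-once), which is what makes the two italicised assertions — ``this is a legal black pebbling along $P_I$'' and ``reaching $s$ determines the node-$2$ subtree up to one value'' — nontrivial: a state labelled $f_i(\V{2i}{I},\V{2i+1}{I})$ can be met on $P_I$ ``by accident'', before $\V{2i}{I},\V{2i+1}{I}$ have been pinned down. The fix is a hybrid/splicing argument: given two inputs of $A$ whose paths pass through $s$ but disagree before it, form the input built from the prefix of one and the suffix of the other; since the relevant queries in the two halves are arranged (using the point-function structure of $A$) to be consistent or to return $0$, the spliced input is consistent with $B$ yet its output is the ``old'' one, forcing a wrong answer or a non-thrifty query and contradicting correctness. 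This splicing and the associated order-of-queries bookkeeping is exactly the technical core of Cook et al.'s original height-$3$ lower bound; the entropy-method reformulation contributes the uniform accounting $|A|/\max_s|g^{-1}(s)|$ and makes the parallel with the three-pebble black-pebbling bound transparent, but inherits the same hard part. (The $\children{4}{2}{k}$ statement is handled by the same template with a four-pebble critical configuration.)
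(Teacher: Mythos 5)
Your proposal takes a genuinely different route from the paper, but it has a real gap at its core. You reduce everything to two italicised claims --- that the computation path of a \emph{general} deterministic BP on the hard inputs induces a legal black pebbling of \T{3}{2}, and that a state carrying three pebbles determines three node-values' worth of the input (fiber size $\le k^4$) --- and then you defer both to an unexecuted ``hybrid/splicing argument,'' conceding that this ``is exactly the technical core'' and that your reformulation ``inherits the same hard part.'' That hard part is precisely what is missing. The pebbling-to-BP correspondence in this paper is only established for \emph{thrifty} BPs (deterministic, read-once nondeterministic, or bitwise-independent nondeterministic); for an unrestricted deterministic BP a state labelled $f_2(a,b)$ can be reached before $v_4,v_5$ are pinned down, the set of inputs reaching a state has no product structure you can exploit, and the splicing of a prefix of one path with a suffix of another need not yield a consistent input because the two halves may query the same $f_i$ at overlapping non-thrifty points with conflicting answers. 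Nothing in your sketch rules this out, so the fiber bound $|g^{-1}(s)|\le k^4$ is unsupported. (Two smaller issues: your accounting yields only $\tfrac12 k^3$ after the case split on $\{2,6,7\}$ versus $\{3,4,5\}$, whereas the theorem claims $k^3$; and the ``restrict to $2^{\ell}$ values'' device from Theorem~\ref{thm:bintbp-lower-bound} does not remove a factor of $2$ --- there it \emph{introduces} one.)

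The paper avoids all of this with a short direct-sum argument that never mentions pebbling. First it shows, by the entropy method applied to the \emph{last edge} of each computation path, that any $k$-way BP computing $u +_k v$ has at least $k^2$ edges and hence at least $k$ states: the edge label together with the output state determines both $u$ and $v$, so each edge is reached by exactly one input. Then it restricts \hft{3}{2}{k} to instances with $f_1 = +_k$ and, for each of the $k^2$ choices of $(r,s)$, fixes the leaves to $(v_4,v_5)=(v_6,v_7)=(r,s)$; the resulting sub-BP computes $f_2(r,s) +_k f_3(r,s)$ and so needs $k$ states querying $f_2(r,s)$ or $f_3(r,s)$. Since these query labels are distinct variables for distinct $(r,s)$, the $k^2$ state sets are disjoint, giving $k^3$ states. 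If you want to keep your pebbling-flavoured narrative, you would need to supply the full splicing analysis for non-thrifty BPs; as written, the proof is incomplete.
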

\begin{proof}
First, we will consider a $k$-way BP that takes two inputs $u, v \in [k]$ and
computes $u +_k v$ where $+_k$ is addition modulo $k$. We will prove a size lower-bound of $k$ states for this problem. Then
we will use this result to prove the theorem.

Let $B$ be a $k$-way BP solving the above problem. We apply the entropy method
to prove the required size lower-bound.  Our input set $A$ consists of $k^2$
inputs (all inputs). Our distribution function maps each input in $A$ to the
last edge in the $k$-way BP $B$ solving this problem. Now consider an arbitrary
edge $e$ labelled $r$ and connecting a state labelled (w.l.g.) $u$ to the output
state $s$. Now consider the set of inputs $F_e$ reaching this edge. The only
possible inputs are those with $u = r$ and $u +_k v = s$.  But this implies
that $v = s -_k r$. Therefore $F_e = \{ (r, s -_k r) \}$ has cardinality one.
Since the choice of $e$ was arbitrary, we have $Edges(B) \geq k^2/1 = k^2$.
Since we are considering $k$-way BPs where each state has exactly $k$ outgoing
edges $States(B) \geq k$. 

Consider the sub-problem of \hft{3}{2}{k} where $f_1 = +_k$, leaves are allowed
to take arbitrary values, and for any input $I$, we allow $f_{j}^{I}(\V{2j}{I},
\V{2j+1}{I})$ for $j = 2, 3$ to take arbitrary values and restrict it to $1$
elsewhere. Consider a $k$-way BP $B$ solving this problem. Now consider the
sub-BP $b^{\prime}$ obtained from $B$ by fixing $(v_4, v_5) = (v_6, v_7) = (r,
s)$ for some $r, s \in [k]$. Note that the sub-BP $B^{\prime}$ computes $u +_k
v$ for $u = f_2(r, s)$ and $v = f_3(r, s)$ and therefore must have at least $k$
states. Now the set of all states querying $f_2$ or $f_3$ in $B$ is the
disjoint union of all states querying $f_2(r, s)$ and $f_3(r, s)$ for $k^2$
$(r, s)$ pairs. Therefore $States(B) \geq k^3$ as claimed.  \end{proof}

The \children{4}{2}{k} problem is the same as \hft{4}{2}{k} problem except that
the tree has no root node and the values at nodes $2$ and $3$ together is
defined as the output.

\begin{theorem}{(\cite{wehr-tep})} Any deterministic $k$-way BP solving
\children{4}{2}{k} has at least $k^4$ states. \label{thm:c4} \end{theorem}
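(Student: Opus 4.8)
The plan is to mirror the proof of Theorem~\ref{thm:f3}, but to start from a base gadget that already computes a \emph{pair} of modular sums, so that a single composition over a layer of $k^2$ leaf‑configurations produces $k^4$ states rather than $k^3$.

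\textbf{Step 1 (base gadget).} First I would prove that any deterministic $k$-way BP $B_0$ computing $g\colon [k]^4 \to [k]^2$ defined by $g(a,b,c,d) = (a +_k b,\ c +_k d)$ has at least $k^2$ non-final states. Apply the entropy method with $A=[k]^4$ (all inputs) and distribution function sending each input to the last edge of its (unique) computation path. Fix an edge $e$ into a final state labelled $(\alpha,\beta)$, with $e$ labelled $\gamma$ and leaving a state that queries a variable $z\in\{a,b,c,d\}$. Every input in $F_e$ satisfies $z=\gamma$, $a +_k b=\alpha$ and $c +_k d=\beta$; pinning $z$ determines one summand of the sum that contains $z$ but leaves the other sum entirely free, so $|F_e|\le k$. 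Hence $B_0$ has at least $k^4/k=k^3$ edges, and since every non-final state has out-degree $k$, it has at least $k^2$ non-final states, each of which queries one of $a,b,c,d$.

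\textbf{Step 2 (reduction).} Following the template of Theorem~\ref{thm:f3}, restrict \children{4}{2}{k} to the sub-problem in which $f_2 = f_3 = +_k$, all leaves take arbitrary values, and for $j\in\{4,5,6,7\}$ the function $f_j^{I}$ is arbitrary on the argument $(v_{2j}^{I},v_{2j+1}^{I})$ and equals $1$ on all other arguments. Let $B$ solve this sub-problem. For each $(r,s)\in[k]^2$ form the sub-BP $B_{r,s}$ by fixing every pair of sibling leaves to $(r,s)$, so that nodes $2j$ and $2j+1$ get values $r$ and $s$ for $j=4,5,6,7$. In $B_{r,s}$ every leaf query, every $f_2$- and $f_3$-query (the value is a fixed function of the query label), and every $f_j(x,y)$-query with $j\in\{4,\dots,7\}$ and $(x,y)\neq(r,s)$ (value forced to $1$) has a single active outgoing edge and is shortcut away. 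What survives computes $g\bigl(f_4(r,s),f_5(r,s),f_6(r,s),f_7(r,s)\bigr)$ with the four values ranging freely over $[k]$, so by Step~1 $B_{r,s}$ has at least $k^2$ non-final states, each querying one of $f_4(r,s),\dots,f_7(r,s)$. A state of $B$ whose label is $f_j(x,y)$ can occur in $B_{r,s}$ only when $(r,s)=(x,y)$, so the $k^2$-element state sets obtained for distinct pairs $(r,s)$ are pairwise disjoint. Therefore $B$ has at least $\sum_{(r,s)\in[k]^2}k^2 = k^4$ states, and since any $k$-way BP solving \children{4}{2}{k} also solves this sub-problem, the theorem follows.

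The only genuinely new ingredient is the base gadget of Step~1, whose sole subtlety is the observation that pinning one queried variable frees exactly one of the two independent sums, giving $|F_e|\le k$ rather than the $|F_e|\le 1$ of the $u +_k v$ gadget. I expect the main place that demands care is the shortcutting/disjointness bookkeeping in Step~2 — checking that after shortcutting $B_{r,s}$ really contains only states querying $f_4(r,s),\dots,f_7(r,s)$ and that these state sets are disjoint over the $k^2$ choices of $(r,s)$ — but this is exactly the kind of argument already carried out in Theorem~\ref{thm:f3}, so it should present no real difficulty.
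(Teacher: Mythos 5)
Your proposal is correct and follows essentially the same route as the paper: the same base gadget computing $(u +_k v,\ w +_k x)$ analysed via the entropy method on last edges with $|F_e|\le k$, followed by the same restriction with $f_2=f_3=+_k$, fixing sibling leaves to $(r,s)$, and summing over the $k^2$ disjoint families of level-2 query states. The only (harmless) difference is that you treat the last-queried variable $z$ generically rather than taking it to be $u$ without loss of generality.
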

\begin{proof} Consider a $k$-way BP that takes four inputs $u, v, w, x$ and
computes the tuple $(u +_k v, w +_k x)$.  We will prove a size lower-bound of
$k^2$ states for this problem and argue that the theorem follows from this
result.

Let $B$ be a deterministic $k$-way BP solving this problem. We now apply the
entropy method.  Our input set $A$ is the set of all inputs and therefore $|A|
= k^4$. Our distribution function will map each input in $A$ to the last edge
in its computation path on $B$. Consider an arbitrary edge $e$ labelled $r$
that connects a query state labelled $u$ to the output state $(s, t)$. Now
consider the set of inputs $F_e$ that get mapped to $e$. We have $u = r$, $v =
s -_k r$, and $w +_k x = t$. Since there are exactly $k$ inputs that satisfy
these conditions $|F_e| \leq k$. Therefore $Edges(B) \geq k^4/k = k^3$ and it
follows that $States(B) \geq k^2$. 

Consider the sub-problem of \children{4}{2}{k} where $f_2 = f_3 = +_k$, leaves
are allowed to take arbitrary values, and for any input $I$, we allow
$f_{j}^{I}(\V{2j}{I}, \V{2j+1}{I})$ for $j = 4, 5, 6, 7$ to take arbitrary
values and restrict it to $1$ elsewhere. Consider a $k$-way BP $B$ solving this
problem.  Now consider the sub-BP $B^{\prime}$ obtained from $B$ by fixing
values of sibling leaves to $(r, s)$.  Note that the sub-BP $B^{\prime}$ solves
the problem discussed in the previous paragraph and hence requires $k^2$
states. As before, since the level 2 query states of $B$ are the disjoint union
of query states for $k^2$ distinct $(r, s)$ pairs, we have $States(B) \geq
k^4$.
\end{proof}

We now present a new lower-bound of $\Omega(2^hk)$ for \sft{h}{2}{k} problem
when the function at internal nodes are restricted to a group operation.  Cook 
et. al. \cite{wehr-tep} has shown a lower bound of $\Omega(2^h)$ for this problem.

\begin{theorem} \label{thm:sft} Any deterministic $k$ way BP solving
\sft{h}{2}{k} with the functions at internal nodes restricted to a group
operation has at least  $2^{h-2} k$ states.  \end{theorem}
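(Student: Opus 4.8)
The plan is to follow the same two-stage template used in the proofs of Theorems~\ref{thm:f3} and~\ref{thm:c4}: first isolate a ``base'' computational problem — evaluating the group product of the $2^{h-1}$ leaves — prove a clean $k$-state lower bound for it via the entropy method, and then lift this to a $2^{h-2}k$ lower bound for \sft{h}{2}{k} by exhibiting $2^{h-2}$ disjoint families of query states inside any BP solving the full problem. The group structure (associativity plus cancellation) is what makes the base problem amenable to the entropy argument, exactly as addition modulo $k$ was in the earlier proofs.

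First I would set up the base problem: a $k$-way BP that reads $2^{h-1}$ inputs $u_1,\dots,u_{2^{h-1}}\in G$ (where $|G|=k$) and outputs the product $u_1 \cdot u_2 \cdots u_{2^{h-1}}$ in $G$. I would apply the entropy method with $A$ the set of all $k^{2^{h-1}}$ inputs and the distribution function sending each input to the \emph{last} edge on its computation path. Fix an edge $e$ labelled $r$ out of a state querying coordinate $u_j$ and into an output state labelled $g\in G$. Any input reaching $e$ has $u_j = r$ and product $g$; since the determined product fixes exactly one degree of freedom (by cancellation, $u_j$ is forced given the other coordinates and the output, or equivalently the fibre has size $k^{2^{h-1}-1}$ within $A$), we get $|F_e|\le k^{2^{h-1}-1}$. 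Hence $Edges(B)\ge k^{2^{h-1}}/k^{2^{h-1}-1} = k$, and since each query state has $k$ outgoing edges, $States(B)\ge 1$ — which is too weak. The fix, mirroring the earlier proofs, is to instead map each input to the last edge and note we really want the number of \emph{distinct output states} or to count more carefully: the correct statement is that the last edge determines the output, so the number of last edges is at least $k$ (one per output value in $G$), forcing at least one query state, i.e.\ $States(B)\ge 1$. So in fact for the base problem the right claim to extract is just $States(B)\ge k$ by the same token as in Theorem~\ref{thm:f3}: the $k$ distinct output states are reached, and each is the head of an edge out of a query state, and these query states must be distinct enough — more precisely, I would count edges into the $k$ output states and argue $k$ query states are needed. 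I expect the base-case bookkeeping to be routine once the right quantity ($k$ output states, hence $k$ states) is identified, exactly analogous to the $+_k$ argument.

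Second, I would lift to the full tree. Consider the sub-problem of \sft{h}{2}{k} where the common internal operation is the group operation $\cdot$ of $G$, the leaves take arbitrary values, and for each input $I$ we allow $f^I_j(\V{2j}{I},\V{2j+1}{I})$ to take arbitrary values at every internal node $j$ while pinning $f^I_j$ to a fixed element elsewhere (this is legitimate because the BP sees only the single function, but thriftiness-style restrictions let us fix the off-diagonal behaviour). By associativity, the value at the root equals the ordered product of all $2^{h-1}$ leaf values. Now fix the values of all sibling pairs of leaves at the bottom level to some tuple $\vec r$; the resulting sub-BP $B'$ solves the base problem on the $2^{h-2}$ next-level products, hence (recursively, or directly) needs $\ge k$ states querying that level. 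Actually I would apply the disjointness argument at a single well-chosen level: the query states of $B$ at the lowest internal level split, over the $k^{2^{h-2}}$ choices of fixed sibling-leaf tuples, into disjoint families each of size $\ge k$ — wait, that overcounts. The cleaner route, matching the $\Omega(2^h)$ bound of \cite{wehr-tep} times the extra factor $k$: for each of the $h-2$ internal levels other than the root, and for each internal node at that level, fixing the inputs below it to a particular value yields a sub-BP computing a binary group product on $k$-valued inputs, which needs $\ge k$ states; these state-sets are pairwise disjoint across the $\sum 2^{i}= 2^{h-1}-2$ internal non-root nodes, but to get exactly $2^{h-2}k$ I would pick the single deepest level (with $2^{h-2}$ internal nodes), obtaining $2^{h-2}$ disjoint families each of size $k$, for a total of $2^{h-2}k$ states.

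The main obstacle, and the step I would spend the most care on, is the \emph{disjointness} of the query-state families across different internal nodes at the chosen level: I must argue that a state of $B$ querying the (single, shared) internal function can be ``charged'' to at most one internal node of the tree. This is where the $\Omega(2^h)$ argument of \cite{wehr-tep} does its real work, and I would need to reconstruct that combinatorial argument — presumably by fixing the leaf values so that the value pair feeding each chosen internal node is distinct from the pairs feeding the others, so that a given physical query $f(x,y)$ on a given argument pair can correspond to at most one of the $2^{h-2}$ nodes — and then combine it with the base-case $k$-state bound to multiply the two. The tightness remark (upper bounds $O(2^h k)$ with the operation fixed, $O(2^h k^2)$ when it is part of the input) I would verify separately using the associativity-based evaluation, but that is not needed for the lower bound itself.
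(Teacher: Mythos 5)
Your final plan coincides with the paper's proof: establish that a deterministic $k$-way BP computing a binary group product $u \cdot v$ needs at least $k$ states (the last-edge entropy argument from Theorem~\ref{thm:f3}, with group cancellation playing the role of $-_k$), then for each of the $2^{h-2}$ sibling-leaf pairs fix all the other leaves, observe that the resulting sub-BP computes such a binary product (up to multiplication by constants determined by the fixed leaves, which does not affect the bound), and sum over the pairs. The opening detour through the product of all $2^{h-1}$ leaves, which you correctly discard as too weak, is not part of the paper's argument and is not needed.

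The one point worth correcting is the step you single out as the main obstacle. Disjointness of the $2^{h-2}$ state families does not require deciding which internal node a query $f(x,y)$ should be ``charged'' to, because the states being counted are not function-query states at all. With the internal operation fixed (WLOG $+_k$, or any fixed group operation), every query to $f$ has a determined answer on the restricted input set and is shortcut when forming the sub-BP; the $k$ surviving states in the sub-BP for the pair $(x_{2i-1}, x_{2i})$ are states of $B$ querying one of those two \emph{leaf} variables. Families for different sibling pairs therefore query disjoint sets of leaf variables and are disjoint for free, with no need to reconstruct the $\Omega(2^h)$ argument of \cite{wehr-tep} or to arrange distinct argument pairs at different internal nodes. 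Your proposed mechanism for disjointness is aimed at the wrong set of states and, had you pursued it, would have been substantially harder than the proof requires.
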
 \begin{proof} Assume
without loss of generality that the functions at internal nodes are $+_k$. The
leaf nodes are labelled $x_1 = 2^{h-1}, \ldots , x_{2^{h-1}} = 2^{h} - 1$. Let
$B$ be a deterministic $k$-way BP solving this problem.  Now consider the
sub-BP $B^{\prime}$ obtained from $B$ by fixing $x_3, \ldots , x_{2^{h-1}}$ to
$1$. The sub-BP $B^{\prime}$ computes $x_1 +_k x_2$ and therefore has at least
$k$ states. A similar argument can be applied to each pair of leaves. Since
there are $2^{h-2}$ disjoint pairs of leaves, the BP $B$ must have at least
$2^{h-2} k$ states.  \end{proof}

\noindent {\bf Upper Bounds:} We observe upper bounds for the size of branching programs computing
\sft{h}{2}{k} problem when the function at internal nodes are restricted to a
group operation. The associativity of the group operation implies upper bounds.
 We now briefly describe a BP for this problem. The BP is a layered BP of width $k$. The BP evaluates 
the group product in a left-associative fashion. In order to do this, the BP only has 
to remember the value of the product $v_1 \ldots v_{i-1}$ in the $i^{\text{th}}$ layer. 
This value is in $[k]$ and can be remembered using width $k$. Then, in the $i^{\text{th}}$ layer, the BP 
reads $v_i$ and moves to $i+1^{\text{st}}$ layer updating the remembered value as required.
There are two variations possible in this setting. In the first one, the
function at the internal nodes is fixed. In this case the branching
program described will be of size $2^hk$ and hence Theorem~\ref{thm:sft} is tight. In the
second version, when the function at the internal node is also a part of the
input, the described method will give an upper bound of $2^hk^2$ (since we also
have to query the function values).

\section{Conclusion}
We studied read-once nondeterministic thrifty branching programs solving \bt{h}{2}{k} and showed that this model captures exactly 
algorithms implementing a whole black-white pebbling strategy.  
We studied nondeterministic thrifty branching programs solving \bt{h}{2}{k} and showed that this model along with the bitwise independence 
restriction captures exactly algorithms implementing a fractional black-white pebbling strategy.  
These results extend the result in 
\cite{wehr-tep} that deterministic thrifty branching programs solving \bt{h}{2}{k} captures exactly algorithms implementing a 
black pebbling strategy to solve this problem. 

Our work is also the first instance where the entropy method, introduced by 
Jukna and Z{\'a}k, is applied to obtain size lower bounds for a nondeterministic computation model.  We also give a simplified 
and unified view of many of the existing size lower bound proofs for branching programs solving the tree evaluation problem.

One of the main open problems that arises out of our work is to understand how restricted is the {\em bitwise independence restriction} on nondeterministic branching programs solving the tree evaluation problem.
Following the thrifty hypothesis in the deterministic world, it is possible that the best nondeterministic branching programs are thrifty and hence it might suffice to prove lower bounds against thrifty versions of the branching program. Although we found that all known nondeterministic branching programs solving tree evalutation problem are bitwise independent, it is conceivable that there is a smaller nondeterministic thrifty branching program without the bitwise independence restriction.

%\nocite{wehr-tep-fsttcs,wehr-tep-mfcs}
\bibliography{references}
\end{document}